\newcommand{\slim}{\operatorname{s-}\hspace{-0.25cm}\lim_{t\to\pm\infty}}
\newcommand{\Schr}{Schr\"odinger }
\renewcommand{\a}{\alpha}
\renewcommand{\b}{\beta}
\newcommand{\g}{\gamma}
\renewcommand{\d}{\delta}
\newcommand{\e}{\varepsilon}
\newcommand{\f}{\varphi}
\newcommand{\ps}{\psi}
\newcommand{\z}{\zeta}
\renewcommand{\l}{\lambda}
\newcommand{\x}{\xi}
\newcommand{\y}{\eta}
\newcommand{\s}{\sigma}
\newcommand{\G}{\Gamma}
\newcommand{\C}{\mathbb{C}}
\newcommand{\R}{\mathbb{R}}
\newcommand{\Z}{\mathbb{Z}}
\newcommand{\T}{\mathbb{T}}
\newcommand{\HS}{\mathcal{H}}
\newcommand{\la}{\langle}
\newcommand{\ra}{\rangle}
\newcommand{\jap}[1]{\langle #1 \rangle}
\renewcommand{\ker}{\operatorname{Ker}}
\newcommand{\supp}{\operatorname{supp}}
\newtheorem{thm}{Theorem}[section]
\newtheorem{lem}[thm]{Lemma}
\newtheorem{prop}[thm]{Proposition}
\newtheorem{cor}[thm]{Corollary}
\theoremstyle{definition}
\newtheorem{defn}[thm]{Definition}
\newtheorem{ass}[thm]{Assumption}
\theoremstyle{remark}
\newtheorem{rem}[thm]{Remark}
\title[Long-range scattering for general lattices]{Construction of Isozaki-Kitada modifiers for discrete Schr\"odinger operators on general lattices}
\author{Yukihide TADANO}
\thanks{\noindent Keywords:long-range scattering theory, discrete Schr\"odinger operators, modified wave operators, time-independent modifiers\\
Mathematics subject classification:47A40, 47B39, 81U05\\
Department of Mathematics, 
Tokyo University of Science, 
Kagurazaka 1-3, Shinjuku-ku, Tokyo 162-8601, Japan. \\
E-mail: {\tt y.tadano@rs.tus.ac.jp}}
\date{\today}
\begin{document}
\begin{abstract}
We consider a scattering theory for difference operators on $\mathcal{H}=\ell^2(\mathbb{Z}^d; \mathbb{C}^n)$ perturbed with a long-range potential $V:\Z^d\to\R^n$.
One of the motivating examples is discrete Schr\"odinger operators on $\Z^d$-periodic graphs.
We construct time-independent modifiers, so-called Isozaki-Kitada modifiers, and we prove that the modified wave operators with the above-mentioned Isozaki-Kitada modifiers exist and that they are complete.
\end{abstract}

\maketitle


\section{Introduction}

The aim of the present article is to construct a long-range scattering theory for difference operators on the space of vector-valued functions on $\Z^d$.
This problem is motivated by discrete Schr\"odinger operators on an arbitrary non-primitive lattice, e.g., hexagonal lattice, diamond lattice, Kagome lattice and graphite (see \cite{An-I-M} for more examples).
Note that the cases of primitive lattices and the hexagonal lattice are considered in \cite{T1} and \cite{T2}, respectively.

Let $\mathcal{H}=\ell^2(\mathbb{Z}^d; \mathbb{C}^n)$, where $d$ and $n$ are positive integers. For $u \in \mathcal{H}$, we use the notation
\begin{align*}
u=\left( \begin{array}{c}
u_1 \\ u_2 \\ \vdots \\ u_n
\end{array}
\right), \quad u_j \in \ell^2(\mathbb{Z}^d)=\ell^2(\mathbb{Z}^d;\C).
\end{align*}
We consider a generalized form of discrete \Schr operators on $\mathcal{H}$:
\begin{align*}
H=H_0+V .
\end{align*}
The unperturbed operator $H_0$ is defined as a convolution operator by $(f_{jk})_{1\leq j,k\leq n}$, that is,
\begin{align*}
&H_0u = \left( \begin{array}{cccc}
H_{0,11} & H_{0,12} & \cdots & H_{0,1n} \\
H_{0,21} & H_{0,22} & \cdots & H_{0,2n} \\
\vdots & \vdots & \ddots & \vdots \\
H_{0,n1} & H_{0,n2} & \cdots & H_{0,nn}
\end{array}
\right) u, \quad u \in \mathcal{H}, \\
&H_{0,jk}u_k(x)=\sum_{y\in\Z^d}f_{jk}(x-y)u_k(y), \quad u_k \in \ell^2(\Z^d).
\end{align*}
Here each $f_{jk}: \mathbb{Z}^d \rightarrow \mathbb{C}$ is a rapidly decreasing function, i.e.,
\begin{align*}
\sup_{x\in\Z^d} \la x \ra^m |f_{jk}(x)| <\infty
\end{align*}
for any $m\in\mathbb{N}$, where $\la x \ra = (1+|x|^2)^{\frac{1}{2}}$.
The perturbation $V$ is a multiplication operator by $V={}^t(V_1,\cdots,V_n):\Z^d \to \R^n$,
\begin{align*}
Vu(x) &= \left( \begin{array}{c}
V_1(x) u_1(x) \\ V_2(x) u_2(x) \\ \vdots \\ V_n(x) u_n(x)
\end{array}
\right) , \quad u \in \mathcal{H} . 
\end{align*}

We denote the discrete Fourier transform by $\mathcal{F}$; 
\begin{align*}
\mathcal{F}u(\xi)=&\left(\begin{array}{c}
Fu_1(\xi) \\
Fu_2(\xi) \\
\vdots \\
Fu_n(\xi)
\end{array}
\right), \quad \xi \in \mathbb{T}^d:=[-\pi,\pi)^d , \\
F u_j(\x) =& (2\pi)^{-\frac{d}{2}}\sum_{x\in\Z^d}e^{-ix\cdot\x}u_j(x) ,
\end{align*}
for $u\in \ell^1(\Z^d;\C^n)$.
Then $\mathcal{F}$ is extended to a unitary operator from $\mathcal{H}$ onto $\hat{\mathcal{H}}=L^2(\mathbb{T}^d; \mathbb{C}^n)$, and we denote its extension by the same symbol $\mathcal{F}$.
We easily see that $\mathcal{F} \circ H_0 \circ \mathcal{F}^*$ is the multiplication operator on $\T^d$ by the matrix-valued function
\begin{align*}
H_0(\xi) = \left( \begin{array}{cccc}
h_{11}(\xi) & h_{12}(\xi) & \cdots & h_{1n}(\xi) \\
h_{21}(\xi) & h_{22}(\xi) & \cdots & h_{2n}(\xi) \\
\vdots & \vdots & \ddots & \vdots \\
h_{n1}(\xi) & h_{n2}(\xi) & \cdots & h_{nn}(\xi)
\end{array}
\right),
\end{align*}
where
\begin{align*}
h_{jk}(\xi):=\sum_{x\in\Z^d}e^{-ix\cdot\xi}f_{jk}(x).
\end{align*}
Since $f_{jk}$'s are assumed to be rapidly decreasing, $h_{jk}$'s are smooth functions on $\T^d$.
Note that $\sigma(H_0) = \{ \lambda \mid \det (H_0(\xi)-\lambda) =0 \ \hbox{for some}\  \xi \in \mathbb{T}^d \} $ and $H_0$ is a self-adjoint operator if and only if $H_0(\x)$ is a symmetric matrix for any $\x\in\T^d$, i.e., by the definition of $H_0(\x)$, 
\begin{align}\label{selfadj}
\overline{f_{jk}(-x)}=f_{kj}(x) , \quad x \in \mathbb{Z}^d, \ 1 \leq j,k \leq n. 
\end{align}

In this paper, we assume the following assumption concerning the self-adjointness of $H_0$ and a long-range condition of $V$.

\begin{ass}\label{ass}
(1) $f_{jk}$'s are rapidly decreasing functions satisfying \eqref{selfadj}. 

(2) $V={}^t(V_1, \cdots, V_n)$ has the following representation
\begin{align*}
V=V_L + V_S ,
\end{align*}
where each entry of $V_L$ is the same, i.e., $V_L={}^t(V_\ell, \cdots, V_\ell)$ with some $V_\ell : \Z^d \to \R$.
Furthermore, there exist $\rho>0$ and $C,C_\a>0$ such that
\begin{align}
|\tilde\partial_x^\a V_\ell(x)| \leq C_\a \la x \ra^{-\rho-|\a|} , \label{long range} \\
|V_S(x)| \leq C \la x \ra^{-1-\rho} \label{short range}
\end{align}
for any $x\in\Z^d$ and $\a\in\Z_+^d$. Here $\tilde\partial_x^\a=\tilde\partial_{x_1}^{\a_1} \cdots \tilde\partial_{x_d}^{\a_d}$, $\tilde \partial_{x_j} V(x)=V(x)-V(x-e_j)$ is the difference operator with respect to the $j$-th variable.
\end{ass}


Assumption \ref{ass} implies that $V$ is a compact operator on $\mathcal{H}$ and hence
\begin{align*}
\sigma_{\operatorname{ess}}(H)=\sigma_{\operatorname{ess}}(H_0) ,
\end{align*}
where $\sigma_{\operatorname{ess}}(H)$ (resp.\ $\sigma_{\operatorname{ess}}(H_0)$) denotes the essential spectrum of $H$ (resp.\ $H_0$).

We denote the union of Fermi surfaces corresponding to the energies in $\G\subset\R$ by
\begin{align*}
\operatorname{Ferm}(\G):=&\{ p = (\x,\l)\in\T^d\times\G \mid \l \text{ is an eigenvalue of }H_0(\x) \} \\
=& \{ p = (\x,\l)\in\T^d\times\G \mid \det \left(H_0(\x)-\l\right)=0 \} . \nonumber
\end{align*}
Before describing the main theorem, we prepare the notation of non-threshold energies.

\begin{defn}\label{nonthr}
$\l_0\in\s(H_0)$ is said to be \textit{a non-threshold energy of $H_0$} if the following properties (1) and (2) hold:

(1) For any $\x_0\in\T^d$ such that $\det(H_0(\x_0)-\l_0)=0$, there exists an open neighborhood $G \subset \T^d \times \R$ of $p=(\x_0,\l_0)$ such that $\operatorname{Ferm}(\R) \cap G$ has a graph representation, i.e. 
\begin{align}\label{lambda}
\operatorname{Ferm}(\R) \cap G = \{(\x,\l(\x)) \mid \x\in U \}
\end{align}
with some $U\ni \x_0$ and $\l\in C^\infty(U)$.


(2) Let $\x_0$ be arbitrarily fixed so that $\det(H_0(\x_0)-\l_0)=0$ holds, and let $\l(\x)$ be as in \eqref{lambda}.
Then $\nabla_\x \l(\x_0) \neq 0$ holds.
\end{defn}

\begin{rem}
There is a sufficient condition of non-threshold energies:
\begin{align*}
\nabla_\x \det(H_0(\x)-\l_0)\neq0 \ \text{for any } \x\in\T^d \ \text{such that} \ \det(H_0(\x)-\l_0)=0 .
\end{align*}
The principal difference is that Definition \ref{nonthr} covers the case where $H_0(\x)$ has degenerate eigenvalues but no branching occurs.
\end{rem}

Let $\Gamma(H_0)$ be the set of non-threshold energies of $H_0$.
Then $\Gamma(H_0)$ is an open set of $\R$ and $\Gamma(H_0) \subset \s(H_0)$.
Note that $H_0$ has purely absolutely continuous spectrum on $\G(H_0)$, i.e., $\s_{pp}(H_0)\cap \G(H_0)=\s_{sc}(H_0)\cap \G(H_0)=\phi$ (see Remark \ref{rem purely ac}).

The main theorem of this paper is the following.

\begin{thm}\label{main}
Suppose Assumption \ref{ass} and $\Gamma \Subset \Gamma(H_0)$. Then there are bounded operators $J_\pm=J_{\pm,\Gamma}$ on $\mathcal{H}$, called Isozaki-Kitada modifiers, such that the modified wave operators exist:
\begin{align}\label{modified WOs}
W_{\textrm{IK}}^\pm(\G)=\slim e^{itH}J_\pm e^{-itH_0}E_{H_0}(\Gamma),
\end{align}
where $E_{H_0}$ denotes the spectral measure of $H_0$, and that the following properties hold:

i) Intertwining property: $H W_{\textrm{IK}}^\pm(\G)=W_{\textrm{IK}}^\pm(\G) H_0$.

ii) Partial isometries: $\|W_{\textrm{IK}}^\pm(\G) u\|=\|E_{H_0}(\Gamma)u\|$.

iii) Completeness: $\operatorname{Ran} W_{\textrm{IK}}^\pm(\G) = E_{H}(\Gamma)\mathcal{H}_{\text{ac}}(H)$.

\noindent
Here $\mathcal{H}_{\text{ac}}(H)$ denotes the absolutely continuous subspace of $H$.
\end{thm}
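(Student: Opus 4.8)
The strategy is the classical Isozaki--Kitada construction adapted to the matrix-valued, discrete setting. The plan is to build $J_\pm$ as Fourier integral operators
\begin{align*}
(J_\pm u)(x) = (2\pi)^{-d/2}\int_{\T^d} e^{i\varphi_\pm(x,\xi)}\, a_\pm(x,\xi)\, (\mathcal{F}u)(\xi)\, d\xi,
\end{align*}
where $\varphi_\pm$ is a real scalar phase and $a_\pm$ is an $n\times n$ matrix-valued amplitude localized in energies near $\Gamma$. First I would use a partition of unity to work near $\operatorname{Ferm}(\Gamma)$; on each piece Definition \ref{nonthr} supplies a single smooth eigenvalue branch $\lambda\in C^\infty(U)$ of $H_0(\xi)$ with $\nabla_\xi\lambda\neq0$, together with its smooth eigenprojection $P(\xi)$. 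The decisive structural simplification is that $V_L={}^t(V_\ell,\dots,V_\ell)$ acts as the scalar $V_\ell(x)$ times the identity on $\C^n$, hence commutes with $H_0(\xi)$ and with every $P(\xi)$; this turns the eikonal problem into the \emph{scalar} Hamilton--Jacobi equation
\begin{align*}
\lambda(\nabla_x\varphi_\pm(x,\xi)) + V_\ell(x) = \lambda(\xi),
\end{align*}
which I would solve approximately by the method of characteristics on the outgoing/incoming regions $\{(x,\xi): \pm\, x\cdot\nabla_\xi\lambda(\xi) > \sigma|x|,\ |x|>R\}$, using the long-range bounds \eqref{long range} to control the error.

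Next I would fix the amplitude $a_\pm$ by solving a transport equation with leading term $P(\xi)$, retaining finitely many correction terms so that the remainder operator
\begin{align*}
HJ_\pm - J_\pm H_0
\end{align*}
has a kernel that is short-range, i.e.\ decays like $\langle x\rangle^{-1-\rho'}$ (for some $\rho'>0$) on the relevant region once the oscillatory integral is organized by non-stationary phase. Here the short-range part $V_S$ is absorbed directly into the remainder via \eqref{short range}, while the off-diagonal and complementary-eigenprojection contributions are killed by inserting $P(\xi)$ and by the symmetry \eqref{selfadj}. The discrete nature enters only through the replacement of derivatives of $V_\ell$ by the difference operators $\tilde\partial_x^\a$, which is exactly what Assumption \ref{ass}(2) is phrased to accommodate.

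With $J_\pm$ in hand, existence in \eqref{modified WOs} follows from Cook's method: differentiating $e^{itH}J_\pm e^{-itH_0}E_{H_0}(\Gamma)$ produces $i e^{itH}(HJ_\pm - J_\pm H_0)e^{-itH_0}E_{H_0}(\Gamma)$, and the key input is a propagation estimate. Because $\nabla_\xi\lambda\neq0$ on $\operatorname{supp}a_\pm$, the classical group velocity is bounded away from zero, so $e^{-itH_0}E_{H_0}(\Gamma)$ is microlocally supported where $\langle x\rangle\sim|t|$; combined with the short-range decay of the remainder this yields an integrable $O(|t|^{-1-\rho'})$ bound, giving existence. The intertwining property i) is then immediate from the group relations. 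For ii) I would verify that $\|J_\pm e^{-itH_0}E_{H_0}(\Gamma)u\|\to\|E_{H_0}(\Gamma)u\|$ as $t\to\pm\infty$, which reduces by stationary phase to the pointwise identity $P(\xi)^*P(\xi)=P(\xi)$ for the principal amplitude together with the same propagation estimate.

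The main obstacle is completeness iii). The plan is to construct a second family of modifiers $\check J_\pm$, built from the dual (inverse) phase and amplitude, so that $\check J_\pm J_\pm - E_{H_0}(\Gamma)$ and $J_\pm\check J_\pm - E_H(\Gamma)$ are microlocally negligible on $\Gamma$, and so that the \emph{inverse} modified wave operators
\begin{align*}
\widetilde W^\pm(\Gamma)=\slim e^{itH_0}\check J_\pm e^{-itH}E_H(\Gamma)
\end{align*}
exist. Existence of $\widetilde W^\pm$ demands a propagation estimate for $H$ itself on $\Gamma$, which is the genuinely hard step and where a Mourre-type or limiting-absorption argument on the non-threshold set $\Gamma(H_0)$ is required. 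Given both families, the identities $\widetilde W^\pm W^\pm = E_{H_0}(\Gamma)$ and $W^\pm\widetilde W^\pm = E_H(\Gamma)P_{\mathrm{ac}}(H)$ follow from the intertwining relations and the negligibility of the products; these give $\operatorname{Ran}W^\pm\supseteq E_H(\Gamma)\mathcal{H}_{\mathrm{ac}}(H)$, while ii) together with the intertwining property and the absolute continuity of $H_0$ on $\Gamma(H_0)$ force $\operatorname{Ran}W^\pm\subseteq E_H(\Gamma)\mathcal{H}_{\mathrm{ac}}(H)$, yielding the claimed equality.
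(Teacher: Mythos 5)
Your global architecture (phases solving a scalar eikonal equation thanks to $V_L={}^t(V_\ell,\dots,V_\ell)$, matrix amplitudes built from the eigenprojections $P_k(\xi)$, a partition over the branches of $\operatorname{Ferm}(\Gamma)$, inverse wave operators for completeness) matches the paper, but there is a genuine gap at the decisive analytic step. You claim that by solving transport equations with finitely many corrections the remainder $HJ_\pm-J_\pm H_0$ can be made short-range, of size $\langle x\rangle^{-1-\rho'}$, and you then run Cook's method. This fails: since the phases $\varphi_\pm^k$ exist only on outgoing/incoming regions, the amplitudes must carry angular cutoffs $\sigma_\pm(\cos(x,\nabla\lambda_k(\xi)))$, and differentiating these produces a term proportional to $\sigma_\pm'(\cos(x,\nabla\lambda_k))\bigl(|\nabla\lambda_k|^2-|x|^{-2}(x\cdot\nabla\lambda_k)^2\bigr)/\bigl(|x||\nabla\lambda_k|\bigr)$, which is exactly of order $\langle x\rangle^{-1}$ and is the \emph{leading} symbol of the commutator on the transition region, not a lower-order error removable by transport corrections; this is precisely the content of Lemma \ref{HJ-JH_0 lem} and formula \eqref{s_a def}. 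For the direct limits $W^\pm(J_\pm)$ one could still rescue Cook's method by non-stationary phase, because this term is supported where $\pm\cos(x,\nabla\lambda_k)\leq\frac12$, away from the free propagation set; but for your inverse operators $\widetilde W^\pm$ the evolution is $e^{-itH}$ and no such microlocalization is available. Your fallback, ``a Mourre-type or limiting-absorption argument,'' is insufficient by itself: the LAP yields local $H$-smoothness of $\langle x\rangle^{-s}$ only for $s>\frac12$, so factoring $\langle x\rangle^{-1}=\langle x\rangle^{-1/2}\cdot\langle x\rangle^{-1/2}$ does not produce Kato-smooth factors.

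The paper supplies exactly the missing ingredient: \emph{radiation estimates}. It proves that the angular-derivative operators $\jap{x}^{-1/2}\nabla_{k,j}^\perp$ in \eqref{radiation op} are locally $H$-smooth (hence also $H_0$-smooth) on $\Gamma(H_0)\setminus\sigma_{pp}(H)$, gaining the critical half power that $\jap{x}^{-1/2}$ alone lacks. Then in Proposition \ref{existence of local WO} the $O(\langle x\rangle^{-1})$ remainder is factorized as $b_\pm^k\sum_{j=1}^d (a_j^k)^2$ modulo genuinely short-range errors, i.e.\ as a finite sum $G_j^*B_jG_j'$ with $G_j$ locally $H$-smooth and $G_j'$ locally $H_0$-smooth, and existence of both $W^\pm(J_\pm^k)$ and $I^\pm(J_\pm^k)$ follows from Kato's smooth perturbation theory rather than Cook's method — handling both time directions at once. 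Two smaller points: the paper uses the adjoints $(J_\pm^k)^*$ directly in $I^\pm$ instead of constructing a separate dual family $\check J_\pm$; and your stationary-phase reduction of ii) to $P(\xi)^*P(\xi)=P(\xi)$ for a single modifier is too quick — the paper needs the pair of cutoffs with $\sigma_+^2+\sigma_-^2=1$ together with the vanishing $W^\pm(J_\mp)=I^\pm(J_\mp)=0$ (proved by non-stationary phase) to convert $J_+^*J_++J_-^*J_-$ into $E_{H_0}(\Gamma)$ asymptotically, and completeness then follows from the isometry $\|I^\pm(J_\pm)u\|=\|E_H^{\textrm{ac}}(\Gamma)u\|$.
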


Various examples of unperturbed operators $H_0$ are given by Ando, Isozaki and Morioka \cite[Section 3]{An-I-M}.
Note that, if the perturbation $V$ is short-range, i.e.,\ $V_L=0$, we can set $J_\pm=\operatorname{Id}_\mathcal{H}$, thus there exist the wave operators in this case.
See \cite{P-R} for short-range scattering theory for discrete \Schr operators on various lattices.
We also note that a long-range scattering theory in the case of $n=1$, e.g.,\ discrete \Schr operators on square and triangular lattices, is considered by Nakamura \cite{N} and the author \cite{T1}.
Moreover, Theorem \ref{main} covers an arbitrary periodic lattice $\mathcal{L}$ with each primitive unit cell $\mathcal{L}/\Gamma$ containing finite elements, where $\Gamma \cong \Z^d$ denotes the transformation group associated to $\mathcal{L}$.
In particular, it includes the result by the author \cite{T2}, where a long-range scattering theory for discrete \Schr operators on the hexagonal lattice is studied.
See also \cite{D-G}, \cite{R-S3}, \cite{Y} and references therein for scattering theory of \Schr operators on $\R^d$.










The organization of this paper is as follows. 
We first prepare notations and properties of pseudodifference operators in Section \ref{notations sec}.
In Section \ref{Mourre section}, the  limiting absorption principle and the propagation estimate for $H$ are studied.
We use the Mourre theory and a standard argument of the propagation of wave packets as in Yafaev \cite[Chapter 10]{Y}.
The construction of conjugate operators is essentially due to Parra and Richard \cite{P-R}.
Section \ref{classical section} is devoted to constructing phase functions which are given as local solutions to eikonal equations corresponding to each fiber of eigenvalues of $H_0(\x)$.
The construction of phase functions is due to \cite{N3}.
In Section \ref{IK section}, using the phase functions in the previous section, we construct Isozaki-Kitada modifiers.
Finally in Section \ref{proof section}, we use lemmas in the previous section to prove Theorem \ref{main}.
The proof is based on Kato's smooth perturbation theory, and is an analogue of that in long-range scattering theory for \Schr operators on $\R^d$ (see \cite{Y}).

\section*{acknowledgement}
The author was partially supported by JSPS Grant Numbers 20J00247, 21K20337 and 23K12991.
This work is supported by the Research Institute for Mathematical Sciences, an International Joint Usage/Research Center located in Kyoto University.
The author would like to thank Shu Nakamura for encouraging me to write this article.


\section{Preliminaries}\label{notations sec}

\subsection{Representations of fibers}

Let $\G$ be as in Theorem \ref{main}, and let $I \Subset \G(H_0)$ be fixed so that $\G \Subset I \Subset \G(H_0)$.

For each $p=(\x_0,\l_0) \in \operatorname{Ferm}(\G(H_0))$, let $G=G_p$ be as in Definition \ref{nonthr}.
Then $\{ G_p \}_{p\in\operatorname{Ferm}(\G(H_0))}$ is an open covering of $\operatorname{Ferm}(\G(H_0))$.
Since $\operatorname{Ferm}(\overline{I})$ is compact, we can take a finite family $\{ G_{j} \}_{j=1}^J = \{ G_{p_j} \}_{j=1}^J$ of open sets which covers $\operatorname{Ferm}(\overline{I})$.

Note that $\{ G_j \cap \operatorname{Ferm}(\R) \}_{j=1}^J$ is also a covering family of $\operatorname{Ferm}(\overline{I})$.
Let $G'_k$, $k=1,\dots,K$, be the connected components of $\cup_{j=1}^J G_j \cap \operatorname{Ferm}(\R)$.
We see that each $G'_k$ remains to have a graph representation
\begin{align}\label{G_k representation}
G'_k=\{ (\x,\l_k(\x)) \mid \x \in \mathcal{U}_k \}
\end{align}
with some open set $\mathcal{U}_k \subset \T^d$ and $\l_k\in C^\infty(\mathcal{U}_k)$.
We denote by $P_k(\x)$ the projection matrix onto $\ker (H_0(\x)-\l_k(\x))$ for $\x\in \mathcal{U}_k$.
Then we have for $\psi\in C_c^\infty(I)$
\begin{align}\label{Rep of E_H0}
\ps(H_0(\x)) = \sum_{k=1}^K \ps(\l_k(\x)) P_k(\x) \chi_{\mathcal{U}_k}(\x) .
\end{align}

%

\subsection{Pseudodifference calculus}

For $a:\Z^d \times \T^d \to M_n(\C)\cong \C^{n\times n}$,
\begin{align*}
a(x,D_x)u(x) :=(2\pi)^{-\frac{d}{2}}\int_{\T^d} e^{ix\cdot\x} a(x,\x) \mathcal{F}u(\x)d\x , \quad u\in \mathcal{H} ,
\end{align*}
denotes the pseudodifference operator on $\Z^d$ with symbol $a(x,\x)$.
If $a$ depends only on $\x$, we denote by $a(D_x)=\mathcal{F}^*\circ a(\cdot) \circ \mathcal{F}$ the Fourier multiplier associated with $a(\x)$ in short.

We cite a lemma concerning the pseudodifference calculus on $\mathcal{H}$ (see \cite[Theorem 4.2.10]{R-T} and the proof of \cite[Lemma 2.2]{T1}).

\begin{lem}\label{PDO lemma}
Let $a:\Z^d\times\Z^d\times\T^d \to M_n(\C)$ be a smooth function with respect to $\T^d$, and let
\begin{align*}
Au(x) = (2\pi)^{-d} \int_{\T^d} \sum_{y\in\Z^d} e^{i(x-y)\cdot\x} a(x,y,\x) u(y)d\x .
\end{align*}
Suppose that for any $\a\in\Z_+^d$
\begin{align}\label{S^0}
\sup_{(x,y,\x)\in\Z^d\times\Z^d\times\T^d}|\partial_{\x}^\a a(x,y,\x)| < \infty .
\end{align}
Then $A$ is a bounded operator on $\ell^2(\mathcal{H})$.
\end{lem}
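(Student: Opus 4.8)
The plan is to regard $A$ as a matrix-valued integral operator and apply a Schur test, the whole point being that its kernel decays rapidly in $x-y$ because we may integrate by parts freely in the \emph{compact} frequency variable $\x$. First I would observe that for finitely supported $u$ one has $Au(x)=\sum_{y\in\Z^d}A(x,y)u(y)$, where
\begin{align*}
A(x,y)=(2\pi)^{-d}\int_{\T^d}e^{i(x-y)\cdot\x}a(x,y,\x)\,d\x\in M_n(\C)
\end{align*}
is well defined for each $(x,y)$ since the integrand is continuous on the compact torus. The goal is then to show $\sup_x\sum_y\|A(x,y)\|<\infty$ and $\sup_y\sum_x\|A(x,y)\|<\infty$ (matrix operator norms), after which the matrix-valued Schur test — a single application of Cauchy–Schwarz, identical to the scalar case — gives $\|A\|\le(\sup_x\sum_y\|A(x,y)\|)^{1/2}(\sup_y\sum_x\|A(x,y)\|)^{1/2}$ and hence boundedness on $\mathcal{H}$, extended from finitely supported $u$ by density.

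The core step is the kernel estimate. Using $(x_j-y_j)e^{i(x-y)\cdot\x}=-i\partial_{\x_j}e^{i(x-y)\cdot\x}$ and integrating by parts on $\T^d$, where periodicity annihilates all boundary terms, one transfers any number of $\x$-derivatives from the exponential onto the symbol:
\begin{align*}
(x-y)^\b A(x,y)=(2\pi)^{-d}i^{|\b|}\int_{\T^d}e^{i(x-y)\cdot\x}\partial_\x^\b a(x,y,\x)\,d\x .
\end{align*}
Taking norms and invoking the hypothesis \eqref{S^0} yields $|(x-y)^\b|\,\|A(x,y)\|\le C_\b$ uniformly in $(x,y)$, with $C_\b$ governed by $\sup\|\partial_\x^\b a\|$. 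Combining these over $|\b|\le N$ via the standard equivalence $\jap{x-y}^N\lesssim\sum_{|\b|\le N}|(x-y)^\b|$ upgrades this to
\begin{align*}
\|A(x,y)\|\le C_N\jap{x-y}^{-N}
\end{align*}
for every $N\in\Z_+$. Choosing $N=d+1$ makes $\sum_{z\in\Z^d}\jap{z}^{-(d+1)}$ convergent, so both Schur sums are finite uniformly in $x$ and in $y$.

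Because $\T^d$ is compact there is no counterpart of the decay-in-$\x$ difficulty that necessitates the full Calderón–Vaillancourt machinery on $\R^d$; the result is genuinely of the elementary kernel-estimate type. The only points that demand attention are the absence of boundary terms in the integration by parts (secured by periodicity) and the correct formulation of the matrix-valued Schur test, neither of which is a real obstacle. The mildly delicate bookkeeping is the passage from the monomial bounds $|(x-y)^\b|\,\|A(x,y)\|\le C_\b$ to the single weighted estimate with $\jap{x-y}^{-N}$, but this is routine once $N$ is taken large enough that $\jap{\cdot}^{-N}$ is summable on $\Z^d$, i.e.\ $N>d$.
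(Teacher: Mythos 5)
Your proof is correct and is essentially the argument the paper itself delegates to its references (\cite[Theorem 4.2.10]{R-T} and \cite[Lemma 2.2]{T1}): since the frequency domain $\T^d$ is compact and $e^{i(x-y)\cdot\x}$ is periodic, repeated integration by parts in $\x$ combined with \eqref{S^0} gives the kernel bound $\|A(x,y)\|\leq C_N\jap{x-y}^{-N}$, and the (matrix-valued) Schur test with $N>d$ concludes. No gaps; your attention to the absence of boundary terms and to taking $N=d+1$ for summability is exactly the right bookkeeping.
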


Let $S^m$ be the symbol class of order $m\in\R$, i.e.,
\begin{align*}
S^m=\left\{ a: \Z^d\times\T^d \to M_n(\C) \mid a(x,\cdot)\in C^\infty(\T^d;M_n(\C)),\ \forall x\in\Z^d , \right. \\
 \left. \sup_{(x,\x)\in\Z^d\times\T^d} \jap{x}^{-m+|\a|} |\tilde\partial_{x}^\a\partial_{\x}^\b a(x,\x)| < \infty ,\ \forall \a,\b\in\Z_+^d \right\} ,
\end{align*}
where $\tilde\partial_{x}^\a$ denotes the difference operator as in \eqref{long range}.

The following two assertions are analogous to the composition formula for pseudodifferential operators.
See \cite[Theorems 4.7.3 and 4.7.10]{R-T} for the proofs.

\begin{lem}\label{composition lemma}
Let $a\in S^m$ and $b\in S^{\ell}$.
Then $a(x,D_x)b(x,D_x)=c(x,D_x)$ with some $c\in S^{m+\ell}$ satisfying the asymptotic expansion
\begin{align*}
&c(x,\x) - \sum_{|\a| \leq M} \partial_{\x}^\a a(x,\x) \tilde\partial_{x}^\a b(x,\x) \in S^{m+\ell-M-1}
\end{align*}
for any $M\in\Z_+$.
\end{lem}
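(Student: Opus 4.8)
The plan is to compute the distribution kernel of the composition explicitly, reduce the composed symbol to an absolutely convergent series by a Fourier expansion in the frequency variable, and then read off the asymptotic expansion from a discrete Taylor (Newton) expansion of $b$ in the difference operators $\tilde\partial_x$. First I would insert the definition \eqref{PDO} of each factor. Writing $v=b(x,D_x)u$ and substituting the Fourier representation twice, the phase factors as
\begin{align*}
e^{i(x-z)\cdot\x}\,e^{i(z-y)\cdot\y}=e^{i(x-y)\cdot\y}\,e^{i(x-z)\cdot(\x-\y)},
\end{align*}
so that $a(x,D_x)b(x,D_x)=c(x,D_x)$ where, after the substitutions $\z=\x-\y$ (legitimate by periodicity) and $w=x-z$,
\begin{align*}
c(x,\y)=(2\pi)^{-d}\int_{\T^d}\sum_{w\in\Z^d}e^{iw\cdot\z}\,a(x,\y+\z)\,b(x-w,\y)\,d\z .
\end{align*}
Expanding $\z\mapsto a(x,\y+\z)$ in its Fourier series $a(x,\y+\z)=\sum_{k\in\Z^d}\hat a_k(x,\y)e^{ik\cdot\z}$ and using $(2\pi)^{-d}\int_{\T^d}e^{i(w+k)\cdot\z}\,d\z=\d_{w,-k}$, the $\z$-integral and the $w$-sum collapse and I obtain the exact identity
\begin{align*}
c(x,\y)=\sum_{k\in\Z^d}\hat a_k(x,\y)\,b(x+k,\y),\qquad \hat a_k(x,\y)=(2\pi)^{-d}\int_{\T^d}e^{-ik\cdot\z}a(x,\y+\z)\,d\z .
\end{align*}

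The advantage of this representation is that the series converges absolutely and is transparent to estimate. Since $a(x,\cdot)$ is smooth on $\T^d$ with $\partial_\x$-derivatives bounded by $\jap{x}^{-m}$, integration by parts in $\z$ (using $\jap{k}^{2N}e^{-ik\cdot\z}=(1-\Delta_\z)^N e^{-ik\cdot\z}$) gives the rapid decay $|\hat a_k(x,\y)|\le C_N\jap{k}^{-N}\jap{x}^{-m}$ for every $N$. Combined with $b\in S^\ell$, which yields $|b(x+k,\y)|\le C\jap{x+k}^{-\ell}\le C'\jap{x}^{-\ell}\jap{k}^{|\ell|}$ by the Peetre inequality, choosing $N>|\ell|+d$ gives $|c(x,\y)|\le C\jap{x}^{-m-\ell}$. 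The estimates for $\tilde\partial_x^\g\partial_\x^\d c$ follow by differentiating the series term by term, applying the discrete Leibniz rule for $\tilde\partial_x$ to the product $\hat a_k(x,\y)b(x+k,\y)$ and the analogous bounds for the derivatives of $a$ and $b$; this proves $c\in S^{m+\ell}$.

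To obtain the asymptotic expansion I would expand the shift $b(x+k,\y)$ about $k=0$ by the discrete Newton formula in the backward differences,
\begin{align*}
b(x+k,\y)=\sum_{|\a|\le M}c_\a(k)\,\tilde\partial_x^\a b(x,\y)+r_M(x,k,\y),
\end{align*}
where $c_\a(k)$ is a polynomial in $k$ of degree $|\a|$ and $r_M$ is a Newton remainder built from $\tilde\partial_x^\b b$ with $|\b|=M+1$. Inserting this into the identity for $c$ and summing in $k$, the finite part becomes $\sum_{|\a|\le M}\big(\sum_k c_\a(k)\hat a_k(x,\y)\big)\tilde\partial_x^\a b(x,\y)$; since the moments $\sum_k k^\a\hat a_k(x,\y)$ reproduce the $\x$-derivatives of $a$ at $\y$ (because $\partial_\z^\a a(x,\y+\z)|_{\z=0}=\partial_\x^\a a(x,\y)$ is precisely the corresponding moment of the Fourier coefficients), regrouping by order identifies the coefficients with the terms $\partial_\x^\a a\,\tilde\partial_x^\a b$ of the statement. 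For the remainder, the gain $|\tilde\partial_x^\b b|\le C\jap{x}^{-\ell-|\b|}$ with $|\b|=M+1$, together with the rapid decay of $\hat a_k$, contributes $M+1$ extra powers of $\jap{x}^{-1}$, so that $c-\sum_{|\a|\le M}\partial_\x^\a a\,\tilde\partial_x^\a b\in S^{m+\ell-M-1}$.

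The main obstacle is this last step: organizing the discrete Newton expansion so that, after summation against $\hat a_k$ and regrouping by order, the coefficients match exactly those of the statement, and controlling the Newton remainder in $S^{m+\ell-M-1}$. Tracking the combinatorial factors relating the moments $\sum_k k^\a\hat a_k$ to the $\x$-derivatives of $a$ is where the care lies, and it is precisely this bookkeeping that forces the difference operators $\tilde\partial_x$ (rather than genuine derivatives in $x$) to appear in the expansion. Finally, since the symbols are matrix-valued one must keep $\hat a_k$ (hence $a$) on the left and $b$ on the right throughout, but this ordering does not otherwise affect the argument.
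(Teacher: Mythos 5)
Your proposal is correct in outline and is essentially the argument behind the result the paper itself does not prove but cites (Ruzhansky--Turunen, Theorems 4.7.3 and 4.7.10), transposed from their toroidal setting $\T^d\times\Z^d$ to symbols on $\Z^d\times\T^d$: the exact identity $c(x,\eta)=\sum_{k\in\Z^d}\hat a_k(x,\eta)\,b(x+k,\eta)$, rapid decay of $\hat a_k$ by integration by parts in the fiber variable, Peetre's inequality to control the shift $x\mapsto x+k$, and a discrete Taylor (Newton) expansion of $b(x+k,\cdot)$ with remainder governed by $(M+1)$-st differences. All of these steps are sound, including the membership $c\in S^{m+\ell}$ via the discrete Leibniz rule (note it carries a shifted evaluation, $\tilde\partial_{x_j}(fg)(x)=(\tilde\partial_{x_j}f)(x)g(x)+f(x-e_j)(\tilde\partial_{x_j}g)(x)$, which is harmless since $\jap{x-e_j}\sim\jap{x}$).

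The one inaccuracy is precisely at the point you flag as the main obstacle: the coefficients do \emph{not} match the display of the lemma literally. The moment identity is $\sum_k k^\alpha\hat a_k(x,\eta)=(-i)^{|\alpha|}\partial_\xi^\alpha a(x,\eta)$, not $\partial_\xi^\alpha a(x,\eta)$, and the Newton coefficients $c_\alpha(k)$ are falling-factorial-type polynomials (for the backward differences $\tilde\partial$, products of $\binom{k_j+n-1}{n}$) with leading term $k^\alpha/\alpha!$, not the monomials $k^\alpha$. What your computation actually yields is $c\sim\sum_\alpha \bigl(c_\alpha(D_\xi)a\bigr)\,\tilde\partial_x^\alpha b$ with $D_\xi=-i\partial_\xi$ and $c_\alpha(D_\xi)$ a constant-coefficient operator of degree $|\alpha|$ with leading part $\frac{(-i)^{|\alpha|}}{\alpha!}\partial_\xi^\alpha$; already at first order in $d=1$ the term is $(-i\partial_\xi a)\,\tilde\partial_x b$, and at second order $\frac{1}{2}(D_\xi^2+D_\xi)a\,\tilde\partial_x^2 b$ rather than $\partial_\xi^2 a\,\tilde\partial_x^2 b$. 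These discrepancies are of the \emph{same} order $\jap{x}^{-m-\ell-|\alpha|}$ as the corresponding expansion terms, so they cannot be absorbed into a remainder gaining $M+1$ powers of $\jap{x}^{-1}$; "regrouping identifies the coefficients with the terms of the statement" is therefore not achievable as claimed. Two remarks restore the situation: first, with the paper's indexing convention the stated remainder class $S^{m+\ell-M-1}$ actually \emph{contains} $S^{m+\ell+1}$, so every term with $|\alpha|\geq1$, as well as all the constant-factor discrepancies, is itself of remainder type and your argument does prove the lemma as literally stated; second, for the strong form of the expansion (remainder decaying like $\jap{x}^{-m-\ell-M-1}$, which is what the applications in the paper use) you must keep the exact operators $c_\alpha(D_\xi)$, i.e.\ the factors $(-i)^{|\alpha|}$ and the binomial combinatorics, in the expansion. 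With that bookkeeping made explicit, your proof is complete.
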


\begin{lem}\label{adjoint lemma}
Let $a\in S^m$. 
Then there exists $b\in S^m$ such that $a(x,D_x)^*=b(x,D_x)$ and $b(x,\x) - a(x,\x)^* \in S^{m-1}$.
\end{lem}

\subsection{Kato's smooth perturbation theory}
For a self-adjoint operator $H$ and an $H$-bounded operator $G$, we say that $G$ is $H$-smooth if
\begin{align}\label{Kato smooth}
\frac{1}{2\pi}\sup_{\|u\|_{\mathcal{H}}=1,u\in D(H)} \int_{-\infty}^\infty \left\| Ge^{-itH}u \right\|^2 dt < \infty.
\end{align}
For a Borel set $I\subset\R$, we say that $G$ is $H$-smooth on $I$ if $GE_H(I)$ is $H$-smooth, and we also say that $G$ is locally $H$-smooth on $I$ if $G$ is $H$-smooth on $I'$ for any $I' \Subset I$.

There are several conditions equivalent to \eqref{Kato smooth} (see e.g.\ \cite{Yg}), and the one we need in the following is:
\begin{align}\label{Kato smooth 2}
\sup_{\lambda\in\R,\e>0} \|G\d_\e(\lambda,H)G^*\| <\infty,
\end{align}
where $\d_\e(\lambda,H)=\frac{1}{2\pi i}\{(H-\lambda-i\e)^{-1}-(H-\lambda+i\e)^{-1}\}$.

\section{Limiting absorption principle and radiation estimates}\label{Mourre section}

In this section, we consider the limiting absorption principle and radiation estimates for the proof of Theorem \ref{main}.

\subsection{Limiting absorption principle}

For a self-adjoint operator $A$ and $m\in\mathbb{N}$, let
\[
C^m(A)=\{ S\in\mathcal{B}(\mathcal{H}) \mid \R\to\mathcal{B}(\mathcal{H}),t\mapsto e^{-itA}Se^{itA} \ \text{is strongly of class} \ C^m \},
\]
and $C^\infty(A)=\cap_{m\in\mathbb{N}}C^m(A)$.
We denote by $\mathcal{C}^{1,1}(A)$ the set of the operators $S$ satisfying
\[
\int_0^1 \| e^{-itA} S e^{itA}+e^{itA} S e^{-itA}-2S \|\frac{dt}{t^2} < \infty .
\]

We set the Besov space
\begin{align*}
B:=(\mathcal{D}(\jap{x}),\mathcal{H})_{\frac{1}{2},1},
\end{align*}
where we have used the notation of real interpolation $(\cdot,\cdot)_{\theta,p}$ between Banach spaces (see \cite[Section 2.1]{A-BdM-G}).

The following proposition is called the limiting absorption principle.
The proof is given by the Mourre theory, and the construction of conjugate operators is essentially due to \cite[Lemma 6.2]{P-R}.

\begin{prop}\label{LAP prop}
Suppose Assumption \ref{ass}. Then:

(1) The set of eigenvalues of $H$ is locally finite in $\G(H_0)$ with counting multiplicities.

(2) For any $\lambda \in \G(H_0)\backslash \s_{pp}(H)$, there exist the weak-* limits in $\mathcal{B}(B,B^*)$
\begin{align*}
\operatorname{w^*-}\lim_{\e\to+0} (H-\l\mp i\e)^{-1} .
\end{align*}
Moreover, each convergence is locally uniform in $\l\in\G(H_0)\backslash \s_{pp}(H)$.
In particular, for any $\G\Subset\G(H_0)\backslash \s_{pp}(H)$,
\begin{align}\label{LAP}
\sup_{\l \in \G, \e>0} \|(H-\l\mp i\e)^{-1}\|_{\mathcal{B}(B,B^*)} < \infty .
\end{align}
\end{prop}

\begin{proof}
Let $\G\Subset\G(H_0)$ be arbitrarily fixed, and recall the representation \eqref{Rep of E_H0}.
We set $\chi_k \in C_c^\infty(\mathcal{U}_k)$ so that $\chi_k=1$ on $\lambda_k^{-1}(\G)$.
We also set the conjugate operator $A$ by
\begin{align*}
A=& \sum_{k=1}^K P_k(D_x) \chi_k(D_x) i[\l_k(D_x),|x|^2] P_k(D_x) \chi_k(D_x) \\
=& \sum_{k=1}^K P_k(D_x) \chi_k(D_x) M_k P_k(D_x) \chi_k(D_x) , \nonumber
\end{align*}
where
\begin{align*}
M_k = x \cdot \nabla_\x\l_k(D_x) + \nabla_\x\l_k(D_x) \cdot x .
\end{align*}
Now we employ the Mourre theory (\cite[Proposition 7.1.3, Corollary 7.2.11, Theorem 7.3.1]{A-BdM-G}, see also \cite[Theorem A.1]{T2}).
Then, since $A$ is $\jap{x}$-bounded, it suffices to show that $H\in \mathcal{C}^{1,1}(A)$ and that, for any $\psi \in C_c^\infty(\G)$, 
there exist $c>0$ and a compact operator $K$ such that the Mourre inequality holds:
\begin{align}
\psi(H) i[H,A] \psi(H) \geq c \psi(H)^2 + K . \label{Mourre}
\end{align}

For the first assertion, we easily see $H_0\in C^\infty(A)$, and $V \in \mathcal{C}^{1,1}(A)$ is proved by \eqref{long range}, \eqref{short range} and Lemma \ref{composition lemma} (see \cite{P-R} and \cite{T2} for details of the proof).

For the proof of \eqref{Mourre}, we learn by Definition \ref{nonthr} (2) that
\begin{align} \label{free Mourre}
&\psi(H_0) i[H_0,A] \psi(H_0) \\
=& 2 \sum_{k=1}^K P_k(D_x) \ps(\l_k(D_x) ) \chi_k(D_x) |\nabla_\x \lambda_k(D_x)|^2 P_k(D_x) \ps(\l_k(D_x) ) \chi_k(D_x) \nonumber\\
\geq& c \sum_{k=1}^K P_k(D_x) \ps(\l_k(D_x) )^2 \chi_k(D_x)^2 \geq c \ps(H_0)^2 . \nonumber
\end{align}
It follows from \eqref{long range} and \eqref{short range} that $i[V,A]$ and $\psi(H)-\psi(H_0)$ are compact, and hence we have \eqref{Mourre}.

\end{proof}

\begin{rem}\label{rem purely ac}
If we adopt the Mourre theory to $H=H_0$, \eqref{free Mourre} implies that $H_0$ has purely absolutely continuous spectrum on $\G(H_0)$.
\end{rem}

Since $B \supset \jap{x}^s\mathcal{H}$ and $B^* \subset \jap{x}^{-s}\mathcal{H}$ hold for any $s>\frac{1}{2}$ (see, e.g., \cite[Theorem 3.4.1]{A-BdM-G}), 
\eqref{LAP} and the equivalence between \eqref{Kato smooth} and \eqref{Kato smooth 2} imply the following corollary.

\begin{cor}
For any $s>\frac{1}{2}$, $\jap{x}^{-s}$ is locally $H$-smooth on $\G(H_0) \backslash \s_\textrm{pp}(H)$.
\end{cor}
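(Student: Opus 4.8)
The final statement to prove is the Corollary: for any $s>\frac{1}{2}$, $\jap{x}^{-s}$ is locally $H$-smooth in $\G(H_0)\backslash\s_{\mathrm{pp}}(H)$.

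The plan is to reduce the claim to the characterization of $H$-smoothness recorded in the subsection on Kato's theory, namely that an operator $B$ is locally $H$-smooth on $\G$ provided $\sup_{\l\in\G,\e>0}\|B\d_\e(\l,H)B^*\|<\infty$. Here I would take $B=\jap{x}^{-s}E_\G(H)$ for a fixed $\G\Subset\G(H_0)\backslash\s_{\mathrm{pp}}(H)$; since the statement is local, it suffices to establish the bound for each such $\G$, and then cover the open set $\G(H_0)\backslash\s_{\mathrm{pp}}(H)$ by countably many relatively compact pieces. First I would rewrite $\d_\e(\l,H)$ in terms of the resolvents, $2\pi\,\d_\e(\l,H)=\tfrac{1}{i}\{(H-\l-i\e)^{-1}-(H-\l+i\e)^{-1}\}$, so that controlling $\|\jap{x}^{-s}\d_\e(\l,H)\jap{x}^{-s}\|$ amounts to controlling the weighted resolvents $\|\jap{x}^{-s}(H-\l\mp i\e)^{-1}\jap{x}^{-s}\|$ uniformly in $\l\in\G$ and $\e>0$.

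The key input is Proposition \ref{LAP prop} together with the remark following it. By that remark, for $s>\frac12$ the Besov space $B=B_\G$ satisfies $\jap{x}^s\mathcal{H}\subset B$ and $B^*\subset\jap{x}^{-s}\mathcal{H}$, equivalently the multiplication operators satisfy $\jap{x}^{-s}:\mathcal{H}\to B^*$ and $\jap{x}^{-s}:B\to\mathcal{H}$ are bounded (by duality). Hence I would factor
\begin{align*}
\jap{x}^{-s}(H-\l\mp i\e)^{-1}\jap{x}^{-s}
=\bigl(\jap{x}^{-s}\big|_{B^*\leftarrow\mathcal{H}}\bigr)^{*}\,(H-\l\mp i\e)^{-1}\bigl(\jap{x}^{-s}\big|_{\mathcal{H}\leftarrow B}\bigr),
\end{align*}
and read off the uniform operator-norm bound directly from the conclusion $\sup_{\l\in\G,\e>0}\|(H-\l\mp i\e)^{-1}\|_{\mathcal{B}(B,B^*)}<\infty$ of Proposition \ref{LAP prop}. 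This yields $\sup_{\l\in\G,\e>0}\|\jap{x}^{-s}\d_\e(\l,H)\jap{x}^{-s}\|<\infty$, and therefore the desired local $H$-smoothness on $\G$.

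I do not expect a serious obstacle here: the corollary is essentially a dictionary translation between the abstract limiting absorption principle on the Besov pair $(B,B^*)$ and the weighted $\jap{x}^{\pm s}$ spaces, combined with the standard equivalence between the resolvent bound and Kato smoothness. The only point requiring minor care is the passage from the abstract Besov embeddings to the boundedness of $\jap{x}^{-s}$ as a map into $B^*$ and out of $B$, which is exactly the content cited from \cite[Theorem 3.4.1]{A-BdM-G} in the sentence preceding the corollary; everything else is the elementary factorization above plus the localization over relatively compact energy windows.
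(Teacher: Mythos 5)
Your proposal is correct and is essentially identical to the paper's own (one-line) proof: the paper deduces the corollary directly from the embeddings $\jap{x}^s\mathcal{H}\subset B$ and $B^*\subset\jap{x}^{-s}\mathcal{H}$ for $s>\frac12$, the uniform $\mathcal{B}(B,B^*)$ resolvent bound of Proposition \ref{LAP prop}, and the resolvent characterization $\sup_{\l,\e}\|B\d_\e(\l,H)B^*\|<\infty$ of Kato smoothness, exactly as you do. One notational slip worth fixing: the bounded maps these embeddings give you are $\jap{x}^{-s}\colon\mathcal{H}\to B$ and its adjoint $\jap{x}^{-s}\colon B^*\to\mathcal{H}$ (so that the sandwich composes with the resolvent $B\to B^*$), not the trivially bounded but useless $\jap{x}^{-s}\colon\mathcal{H}\to B^*$ and $\jap{x}^{-s}\colon B\to\mathcal{H}$ that you wrote in your factorization.
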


\subsection{Radiation estimates}
In order to prove the existence and completeness of modified wave operators, we use, in addition to the limiting absorption principle, other propagation estimates called radiation estimates (see \cite[Theorem 10.1.7]{Y}).

\begin{prop}\label{prop_rad_est}
Let $\G\Subset\G(H_0)$ be fixed, and let $\l_k(\x)$, $k=1,\dots,K$, be as in \eqref{G_k representation}.
We set for $k=1,\dots,K$ and $j=1,\dots,d$,
\begin{align*}
\nabla_{k,j}^\perp := \left\{ (\partial_{\x_j}\l_k)(D_x) - \chi_{\{x\neq0\}} |x|^{-2} x_j \la x, (\nabla_\x \l_k)(D_x)\ra \right\} P_k(D_x)\chi_k(D_x) ,
\end{align*}
where $\chi_k \in C_c^\infty(\mathcal{U}_k)$ is fixed arbitrarily so that $\chi_k=1$ on $\lambda_k^{-1}(\G)$.
Then
\begin{align}\label{radiation op}
\chi_{\{x\neq0\}} |x|^{-\frac{1}{2}} \nabla_{k,j}^\perp
\end{align}
is locally $H$-smooth on $\G(H_0)\backslash \s_{pp}(H)$.
\end{prop}

\begin{proof}
Fix $k=1,\dots,K$.
For simplicity of notation, we write $\l$, $P$, $\chi$ and $\nabla_j^\perp$ instead of $\l_k$, $P_k$, $\chi_k$ and $\nabla_{k,j}^\perp$, respectively.

Let $a\in C^\infty(\R^d)$ be fixed so that $a(x)=|x|$ for $|x|\geq1$,
and let
\begin{align*}
a_j:=\partial_{x_j}a, \quad v_j:=\partial_{\x_j}\l.
\end{align*}
We set
\begin{align*}
\mathbb{A} := (P\chi)(D_x) \sum_{j=1}^d \left\{a_j(x)v_j(D_x)+v_j(D_x)a_j(x) \right\} (P\chi)(D_x) .
\end{align*}
Then the representation \eqref{Rep of E_H0} implies
\begin{align*}
i[H_0,\mathbb{A}] = (P\chi)(D_x) \cdot M \cdot (P\chi)(D_x),
\end{align*}
where
\begin{align*}
M = \sum_{j=1}^d \left\{ i[\l(D_x),a_j(x)] \cdot v_j(D_x) + v_j(D_x) \cdot i[\l(D_x),a_j(x)] \right\} .
\end{align*}
It follows from Lemma \ref{composition lemma} that, formally,
\begin{align*}
M 
&= 2 \sum_{j=1}^d \sum_{\ell=1}^d v_\ell(D_x)a_{j\ell}(x) v_j(D_x) + R_1 ,
\end{align*}
where $a_{j\ell}:=\partial_{x_\ell}\partial_{x_j}a$, and $R_1$ satisfies $\jap{x}^2 (P\chi)(D_x) R_1 (P\chi)(D_x) \in \mathcal{B}(\mathcal{H})$.
Since for $|x|\geq1$
\begin{align*}
a_{j\ell}(x) = \partial_{x_\ell}\partial_{x_j}\left(|x|\right)
 = -\frac{x_j x_\ell}{|x|^3} + \d_{j\ell} |x|^{-1} , 
\end{align*}
we learn
\begin{align}\label{radiation pf commutator}
&(u,i[H_0,\mathbb{A}]u) \\
 =& -2 \sum_{j=1}^d \sum_{\ell=1}^d (u^\ell, \frac{x_j x_\ell}{|x|^3} \chi_{\{x\neq0\}} u^j) 
+2 \sum_{j=1}^d (u^j, |x|^{-1}\chi_{\{x\neq0\}}u^j) \nonumber \\
& \hspace{55mm} + ((P\chi)(D_x)u, R_2 (P\chi)(D_x)u), \nonumber
\end{align}
where
\begin{align*}
u^j:= (v_j P\chi)(D_x) u,
\end{align*}
and
\begin{align*}
R_2= R_1 + 2 \sum_{j=1}^d \sum_{\ell=1}^d a_{j\ell}(0) v_\ell(D_x) \chi_{x=0}(x) v_j(D_x)
\end{align*}
also satisfies $\jap{x}^2 (P\chi)(D_x) R_2 (P\chi)(D_x) \in \mathcal{B}(\mathcal{H})$.

On the other hand, 
a direct computation implies for $x\neq0$
\begin{align*}
&\left| \nabla_{j}^\perp u (x) \right|^2 \\
=& |u^j(x)|^2 - |x|^{-2} x_j \sum_{\ell=1}^d x_\ell \left( u^\ell(x)\overline{u^j(x)}+\overline{u^\ell(x)}u^j(x) \right) \\
& \hspace{5cm} + |x|^{-4} {x_j}^2 \sum_{\ell=1}^d \sum_{m=1}^d x_\ell x_m \overline{u^\ell(x)}u^m(x).
\end{align*}
Summing up over $j=1,\dots,d$, we learn
\begin{align}\label{differentiation across trajectory}
&\sum_{j=1}^d \left| \nabla_{j}^\perp u (x) \right|^2 \\
=& \sum_{j=1}^d |u^j(x)|^2 - |x|^{-2} \sum_{j=1}^d \sum_{\ell=1}^d x_j x_\ell \left( u^\ell(x)\overline{u^j(x)}+\overline{u^\ell(x)}u^j(x) \right) \nonumber \\
& \hspace{5cm} + |x|^{-2} \sum_{\ell=1}^d \sum_{m=1}^d x_\ell x_m \overline{u^\ell(x)}u^m(x) \nonumber \\
=& \sum_{j=1}^d |u^j(x)|^2 - |x|^{-2} \sum_{\ell=1}^d \sum_{m=1}^d x_\ell x_m \overline{u^\ell(x)}u^m(x), \quad x\neq0 . \nonumber
\end{align}

Combining \eqref{differentiation across trajectory} with \eqref{radiation pf commutator}, we obtain
%
\begin{align*}
\left( u, i[H,\mathbb{A}]u\right) 
 =& 2\sum_{j=1}^d \left\| \chi_{\{x\neq0\}}|x|^{-1/2} \nabla_{j}^\perp u \right\|^2 \\
  &+ ((P\chi)(D_x)u, R_2 (P\chi)(D_x)u) + (u, i[V,\mathbb{A}] u) .
\end{align*}
We see that $\jap{x}^{1+\rho} [V,\mathbb{A}] \in \mathcal{B}(\mathcal{H})$ by \eqref{long range}, \eqref{short range} and Lemma \ref{composition lemma}.
According to \cite[Proposition 0.5.11]{Y}, the above formula and local $H$-smoothness of $\la x \ra^{-s}$ for $s>\frac{1}{2}$ imply that of \eqref{radiation op}.
\end{proof}

\section{Classical mechanics}\label{classical section}

In this section, we construct phase functions used for the definition of time-independent modifiers $J_\pm$ in \eqref{modified WOs}.
For the precise definition of $J_\pm$, see \eqref{time-independent modifiers}.

Let $\l_k(\x):\mathcal{U}_k\to\R$, $k=1,\dots,K$, be the functions in \eqref{G_k representation}.
The next proposition concerns the classical scattering problem with respect to the Hamiltonian $\l_k(\x)+\tilde V_\ell(x)$ on $T^*\mathcal{U}_k=\R_x^d\times\mathcal{U}_k$, where $\tilde V_\ell$ is a smooth extension of $V_\ell$ onto $\R^d$ such that $|\partial_x^\a \tilde V_\ell(x)|\leq C_\a' \la x \ra^{-\rho-|\a|}$ holds.
See \cite[Lemma 2.1]{N} for a concrete construction of $\tilde V_\ell$.

The proof of the following proposition is given by \cite[Section 2]{N3} (see also \cite{T1} and \cite{IsoKita}).

\begin{prop}\label{phase functions}
Let $\l_k(\x):\mathcal{U}_k\to\R$, $k=1,\dots,K$, be fixed. 
Then for any open set $U \Subset \mathcal{U}_k$ and $\e\in(0,2)$, there exist $R>0$ and smooth functions $\f_\pm^k(x,\x)$ defined on a neighborhood of
\begin{align*}
D_{k,\pm} = \{(x,\x)\in \R^d \times U \mid |x| \geq R, \ \pm \cos(x,\nabla\lambda_k(\x)) \geq-1+\e \},
\end{align*}
where
\begin{align*}
\cos(x,\nabla\l_k(\x)) := \frac{x\cdot\nabla\l_k(\x)}{|x| |\nabla\l_k(\x)|} ,
\end{align*}
such that
\begin{align}\label{eikonal}
\l_k(\nabla_x\f_\pm^k(x,\x)) + \tilde V_\ell(x) = \l_k(\x), \quad (x,\x) \in D_{k,\pm}.
\end{align}
Furthermore, $\f_\pm^k$ satisfy for $(x,\xi)\in D_{k,\pm}$
\begin{align} 
\left|\partial_x^\alpha\partial_\xi^\beta\left[\varphi_\pm^k(x,\xi)-x\cdot\xi\right]\right|
&\leq C_{\alpha\beta} \langle x\rangle^{1-\rho-|\alpha|}, \label{phase estimate} \\
\left|{}^t\nabla_x\nabla_\xi\varphi_\pm^k(x,\xi)-I \right|&<\frac{1}{2}. \label{phase estimate2}
\end{align}


\end{prop}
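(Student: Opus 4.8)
The plan is to solve the eikonal equation \eqref{eikonal} by the method of characteristics, following the Isozaki--Kitada construction adapted to the torus-valued momentum variable as in Nakamura \cite{N3}. First I would fix $k$ and write $\f = \f_\pm^k$, seeking a solution of the form $\f(x,\x) = x\cdot\x + \psi(x,\x)$ where $\psi$ is the correction term that must satisfy the estimate \eqref{phase estimate}, i.e.\ $\psi$ is $O(\jap{x}^{1-\rho})$ together with its derivatives. Substituting into \eqref{eikonal} and using $\nabla_x\f = \x + \nabla_x\psi$, I would Taylor-expand $\l_k(\x + \nabla_x\psi)$ around $\x$, obtaining
\begin{align*}
\l_k(\x) + \nabla_\x\l_k(\x)\cdot\nabla_x\psi + O\!\left(|\nabla_x\psi|^2\right) + \tilde V_\ell(x) = \l_k(\x),
\end{align*}
so that to leading order $\psi$ must satisfy the transport-type equation $\nabla_\x\l_k(\x)\cdot\nabla_x\psi = -\tilde V_\ell(x) + (\text{quadratic remainder})$. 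The non-threshold condition $\nabla_\x\l_k\neq0$ from Definition \ref{nonthr}(2) guarantees the transport vector field is nonvanishing on $\mathcal{U}_k$, which is exactly what makes the characteristic flow transversal and the construction possible.

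Next I would set up the characteristic (Hamiltonian) flow for the classical Hamiltonian $\l_k(\x) + \tilde V_\ell(x)$ on $T^*\T^d = \R_x^d\times\T_\x^d$, namely $\dot x = \nabla_\x\l_k(\x)$, $\dot\x = -\nabla_x\tilde V_\ell(x)$. On the outgoing/incoming region $D_{k,\pm}$, where $\pm\cos(x,\nabla\l_k(\x)) > 1-\e$, the position moves essentially along $\pm\nabla_\x\l_k$, so along a forward ($+$) or backward ($-$) trajectory $|x(t)|$ grows linearly in $|t|$; combined with the decay $|\partial_x^\a\tilde V_\ell|\le C_\a'\jap{x}^{-\rho-|\a|}$ this makes the momentum $\x(t)$ converge as $t\to\pm\infty$ and keeps the trajectory inside a cone where $\cos(x,\nabla\l_k)$ stays close to $\pm1$. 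The phase $\psi$ is then obtained by integrating the Lagrangian along these trajectories (an action integral), and the decay rate $-\rho$ of $\tilde V_\ell$ integrates to give precisely the $\jap{x}^{1-\rho}$ bound in \eqref{phase estimate}. Differentiating the flow with respect to the initial data and using Gronwall-type estimates on the variational equations yields the derivative bounds $|\partial_x^\a\partial_\x^\b\psi|\le C_{\a\b}\jap{x}^{1-\rho-|\a|}$; the choice of $R$ large is what absorbs constants and secures convergence of the time integrals. For \eqref{phase estimate2}, note ${}^t\nabla_x\nabla_\x\f = I + {}^t\nabla_x\nabla_\x\psi$, and since $\nabla_x\psi = O(\jap{x}^{-\rho})$ its further $\x$-derivative is $O(\jap{x}^{-\rho})$, hence smaller than $\tfrac12$ once $R$ is taken large enough.

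I would expect the main obstacle to be the global-in-time control of the characteristic flow on $D_{k,\pm}$: one must show that trajectories launched in the outgoing (resp.\ incoming) cone stay in that cone for all $t\ge0$ (resp.\ $t\le0$) and escape to spatial infinity, so that the action integral defining $\psi$ actually converges and is smooth in $(x,\x)$. This is the standard but delicate step where the non-threshold hypothesis $\nabla_\x\l_k\neq0$ and the long-range decay rate $\rho>0$ interact, and where one invokes the a priori geometry of the region $D_{k,\pm}$ to close a bootstrap on $\cos(x,\nabla\l_k(\x))$. A secondary technical point is that $\l_k(\x)$ is only defined and smooth on the open set $\mathcal{U}_k\subset\T^d$, so the flow must be confined to $\x$-momenta staying in $U\Subset\mathcal{U}_k$; the decay of $\tilde V_\ell$ ensures the momentum drift is small and, for $R$ large, the trajectory's momentum never leaves $\mathcal{U}_k$. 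Since all of this is carried out verbatim in Nakamura \cite{N3}, I would cite that construction and verify only that the present hypotheses (the graph representation \eqref{G_k representation} and the non-threshold condition) supply the needed nonvanishing gradient and smoothness of $\l_k$.
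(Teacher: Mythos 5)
Your proposal is correct and takes essentially the same route as the paper: the paper supplies no independent proof of Proposition \ref{phase functions} but defers it to Nakamura \cite{N3} (see also \cite{T1} and \cite{IsoKita}), which is precisely the outgoing/incoming characteristic-flow construction for the Hamiltonian $\l_k(\x)+\tilde V_\ell(x)$ on $T^*\T^d$ that you sketch, including the cone-invariance and escape estimates for trajectories and the action-integral bound yielding \eqref{phase estimate} and, for $|x|>R$ large, \eqref{phase estimate2}. Your closing check, that the graph representation \eqref{G_k representation} and the non-threshold condition of Definition \ref{nonthr} provide the smoothness and nonvanishing of $\nabla_\x\l_k$ needed to apply that construction, is exactly the verification the paper implicitly relies on.
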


\section{Construction of Isozaki-Kitada modifiers}\label{IK section}

Let $\G\Subset\G(H_0)$ be fixed.
Let $\l_k \in C^\infty(\mathcal{U}_k)$, $k=1,\dots,K$, be as in \eqref{G_k representation}, 
and let $\f_{\pm}^{k}$ be  the phase functions constructed in Proposition \ref{phase functions} with setting $\e=\frac14$ and $U$ so that $\l_k^{-1}(\G)\Subset U \Subset \mathcal{U}_k$.

We take functions $\chi_k\in C_c^\infty(U;[0,1])$, $\eta\in C^\infty(\R^d)$ and $\s_\pm\in C^\infty(\R;[0,1])$ such that
\begin{align}
&\chi_k(\x)=1, \quad \x \in \lambda_k^{-1}(\G) , \label{chi 1} \\
&\eta(x)=
	\begin{cases}
		1 & \quad \text{if } |x|\geq 2R , \\
		0 & \quad \text{if } |x|\leq R ,
	\end{cases}
\label{eta 01} \\
&\s_\pm(\theta)=
	\begin{cases}
		1 & \quad \text{if } \pm\theta \geq \frac{1}{2}, \\ 
		0 & \quad \text{if } \pm\theta \leq -\frac{1}{2}, 
	\end{cases}
\label{sigma 01}\\
&\s_+(\theta)^2+\s_-(\theta)^2=1, \quad \theta\in\R, \label{sigma sum1}
\end{align}
where $R>0$ is the constant in Proposition \ref{phase functions}.
Then we define the Isozaki-Kitada modifiers $J_\pm^k$ associated with the pair $(P_k,\l_k,\mathcal{U}_k)$ by
\begin{align}\label{IK1}
J_\pm^k u(x) := 
(2\pi)^{-\frac{d}{2}} \int_{\T^d} e^{i\f_{\pm}^k(x,\x)} s_\pm^k(x,\x) \mathcal{F}u(\x) d\x ,
\end{align}
where
\begin{align*}
s_\pm^k(x,\x):=\eta(x)\s_\pm\left( \cos(x,\nabla\l_k(\x)) \right) P_k(\x) \chi_k(\x).
\end{align*}
We recall that $P_k(\x)$ is the projection matrix onto $\ker(H_0(\x)-\l_k(\x))$,
and note that $\supp s_\pm^k \subset D_{k,\pm}$ holds.
Their formal adjoints are given by
\begin{align*}
(J_\pm^k)^* u(x) = \mathcal{F}^* \left( (2\pi)^{-\frac{d}{2}} \sum_{y\in\Z^d} e^{-i\f_{\pm}^k(y,\cdot)} s_\pm^k(y,\cdot) u(y) \right) .
\end{align*}
Direct computations imply
\begin{align}\label{s is in S^0}
\sup_{(x,\x)\in\R^d\times\T^d}\jap{x}^{|\a|}|\partial_x^\a\partial_{\x}^\b s_\pm^k(x,\x)| < \infty ,
\end{align}
in particular \eqref{S^0} holds.

The next lemma follows from an analogue of the argument of calculus of Fourier integral operators (see \cite{IsoKita} and \cite{T1}).

\begin{lem}\label{J lem}
Let $k=1,\dots,K$ be fixed, and let $\rho>0$ be the constant in Assumption \ref{ass} (2). Then:

(1) $J_\pm^k$ are bounded operators on $\mathcal{H}$. 

(2) The operators
\begin{align}
\la x \ra^\rho \left( J_\pm^k (J_\pm^k)^* - s_\pm^k(x,D_x) s_\pm^k(x,D_x)^* \right), \label{J2}\\
\la x \ra^\rho \left( (J_\pm^k)^* J_\pm^k - s_\pm^k(x,D_x)^* s_\pm^k(x,D_x) \right) \label{J3}
\end{align}
are bounded on $\mathcal{H}$.

(3) For any $q\geq0$,
\begin{align}
\la x \ra^{-q} J_\pm^k \la x \ra^q, \label{JX} 
\end{align}
is bounded on $\mathcal{H}$.

(4) Suppose that $\psi = \psi(\x) \in C^\infty(\T^d;M_n(\C))$ commutes with $s_\pm^k(x,\x)$ for any $(x,\x)\in\Z^d\times\T^d$. 
Then
\begin{align}
\la x \ra^\rho [J_\pm^k, \psi(D_x)] \label{JXI}
\end{align}
is bounded on $\mathcal{H}$.
In particular, $[J_\pm^k, \psi(D_x)]$ are compact.

(5) If $k \neq \ell$, then $J_\pm^k (J_\pm^\ell)^*=0$, and $(J_\pm^k)^* J_\pm^\ell$ are compact on $\mathcal{H}$.
\end{lem}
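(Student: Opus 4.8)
The plan is to prove the five assertions by reducing everything to the pseudodifference calculus of Lemmas \ref{PDO lemma}--\ref{composition lemma}, using the fact that $s_\pm^k$ lies in the symbol class $S^0$ by \eqref{s is in S^0}. The central object is the composition of the Fourier integral operator $J_\pm^k$ with its adjoint, and the key analytic input is a stationary/non-stationary phase analysis of the resulting oscillatory kernel governed by the phase difference $\f_\pm^k(x,\x)-\f_\pm^k(y,\x)$. By the gradient estimate \eqref{phase estimate2}, namely $|{}^t\nabla_x\nabla_\x\f_\pm^k - I|<\tfrac12$, the map $\x\mapsto\nabla_x\f_\pm^k(x,\x)$ is a diffeomorphism uniformly in $x$, so one can perform a change of variables and establish an Egorov-type representation: each product such as $J_\pm^k(J_\pm^k)^*$ is again a pseudodifference operator whose full symbol is computed by an asymptotic expansion whose leading term is $s_\pm^k(s_\pm^k)^*$.

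First I would prove (1): boundedness of $J_\pm^k$ on $\mathcal{H}$. I would write $J_\pm^k(J_\pm^k)^*$ as an operator with an oscillatory kernel in $(x,y)$ coming from the phase $\f_\pm^k(x,\x)-\f_\pm^k(y,\x)$, integrate by parts in $\x$ exploiting \eqref{phase estimate}--\eqref{phase estimate2} to gain decay in $|x-y|$, and invoke a Schur/Cotlar--Stein type bound; then $\|J_\pm^k\|^2=\|J_\pm^k(J_\pm^k)^*\|$ gives the claim. For (2), I would carry the same stationary-phase expansion one order further: after the change of variables $\z=\nabla_x\f_\pm^k(x,\x)$ (legitimate by \eqref{phase estimate2}), the principal symbol of $J_\pm^k(J_\pm^k)^*$ equals $s_\pm^k(x,D_x)s_\pm^k(x,D_x)^*$ and the remainder is one order lower in the symbol filtration; since $s_\pm^k$ decays like $\jap{x}^{-\rho}$ through the long-range structure tied to \eqref{phase estimate}, the difference is improved by a factor $\jap{x}^{-\rho}$, so multiplying by $\jap{x}^\rho$ keeps it bounded. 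The same computation with the roles of $x$ and $\x$ interchanged handles \eqref{J3}.

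For (3), boundedness of $\jap{x}^{-p}J_\pm^k\jap{x}^p$, I would commute the weight $\jap{x}^p$ through the Fourier integral operator: conjugating produces a new symbol $e^{-i\f}\jap{x}^{-p}\jap{\,\cdot\,}^p$-type factor whose $\x$-derivatives remain bounded because differentiating $\jap{y}^p$ against the oscillation and using \eqref{phase estimate2} trades each derivative of the weight for a bounded factor, reducing to Lemma \ref{PDO lemma}. Assertion (4) is the commutator estimate: since $\psi(D_x)$ is a Fourier multiplier commuting with $s_\pm^k(x,\x)$ pointwise, the commutator $[J_\pm^k,\psi(D_x)]$ has vanishing principal symbol, so by Lemma \ref{composition lemma} its symbol lies one order lower, i.e.\ gains a factor $\jap{x}^{-\rho}$ (using $\rho\le1$ implicitly through the symbol gain), whence \eqref{JXI} is bounded. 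Finally (5): when $k\ne\ell$ the supports of $P_k\chi_k$ and $P_\ell\chi_\ell$, or rather the phase geometry, force disjointness — $J_\pm^k(J_\pm^\ell)^*=0$ follows because the two phases $\f_\pm^k,\f_\pm^\ell$ are associated with distinct eigenvalue branches whose projections satisfy $P_kP_\ell=0$ after the change of variables, killing the kernel identically; compactness of $(J_\pm^k)^*J_\pm^\ell$ comes from the fact that the overlap of the angular cutoffs and branch cutoffs is supported where the symbol decays, yielding an $S^{-\rho}$ operator, which is compact on $\mathcal{H}$.

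I expect the main obstacle to be the rigorous justification of the stationary-phase expansion in the \emph{discrete} setting underlying (2): on $\Z^d$ one cannot integrate by parts in $y$ freely, and the expansion of a product of Fourier integral operators must be reorganized as a genuine pseudodifference composition via Lemma \ref{composition lemma}, controlling the $\tilde\partial_x$-difference quotients of the amplitude rather than smooth $x$-derivatives. Establishing that the change of variables $\z=\nabla_x\f_\pm^k(x,\x)$ and the attendant Jacobian factors respect the discrete symbol class $S^0$ — so that the leading term is exactly $s_\pm^k(x,D_x)s_\pm^k(x,D_x)^*$ and the error genuinely gains the factor $\jap{x}^{-\rho}$ — is where the bulk of the technical work lies, and it is here that the long-range decay \eqref{phase estimate} of $\f_\pm^k-x\cdot\x$ must be used most carefully.
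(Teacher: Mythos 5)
Your overall strategy coincides with the paper's for (1), \eqref{J2} and (3): compute $J_\pm^k(J_\pm^k)^*$ as an oscillatory kernel, write $\f_\pm^k(x,\x)-\f_\pm^k(y,\x)=(x-y)\cdot\z(\x;x,y)$ with $\z(\x;x,y)=\int_0^1\nabla_x\f_\pm^k(y+\theta(x-y),\x)\,d\theta$, use \eqref{phase estimate2} to change variables $\x\mapsto\z$, and feed the resulting amplitude into Lemma \ref{PDO lemma}; the $\jap{x}^{-\rho}$ gain in \eqref{J2} comes, as you say, from \eqref{phase estimate} (though note it is the amplitude difference $t_\pm^k-s_\pm^k s_\pm^k$, not $s_\pm^k$ itself, that decays -- your phrasing there is garbled). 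Your treatment of (3) by trading the weight for $\x$-derivatives against the oscillation is exactly the paper's device $L=\jap{\nabla_\x\f_\pm^k}^2-2i\jap{\nabla_\x\f_\pm^k,\nabla_\x}-\Delta_\x$, supplemented by complex interpolation to reach non-even $p$.

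The genuine gap is in \eqref{J3}, the second half of (4), and the rigorous part of (5). You correctly identify the obstacle -- on $\Z^d$ one cannot integrate by parts in the discrete variable -- but your proposed remedy, reorganizing the product as a pseudodifference composition via Lemma \ref{composition lemma}, cannot work: that lemma applies only to operators with the flat phase $x\cdot\x$, whereas in $(J_\pm^k)^*J_\pm^k$ and $[J_\pm^k,\psi(D_x)]$ the oscillation in the discrete variable carries the nonlinear phase $\f_\pm^k$, and there is no discrete summation by parts against $e^{i(\f_\pm^k(x,\eta)-\f_\pm^k(x,\x))}$. The paper's missing ingredient is the Poisson summation argument of \cite[Lemma 7.1]{N} (see also \cite[Lemma 2.3]{T1}): one replaces $\sum_{x\in\Z^d}$ by $\int_{\R^d}$ modulo smoothing operators, then changes variables $x\mapsto\int_0^1\nabla_\x\f_\pm^k(x,\x+\theta(\eta-\x))\,d\theta$ and applies continuum tools; in (4), after Taylor-expanding $\psi(\nabla_x\f_\pm^k(x,\y))-\psi(\x)$ and integrating by parts in the now-continuous $x$, $L^2$-boundedness comes from the Asada--Fujiwara theorem \cite[Theorem 2.1]{As-F}, not from a symbol calculus (the commutator is not a pseudodifference operator, so ``vanishing principal symbol plus Lemma \ref{composition lemma}'' is unavailable, and the paper instead splits $\psi(\y)-\psi(\x)$ through the intermediate point $\nabla_x\f_\pm^k(x,\y)$, the first piece being $O(\jap{x}^{-\rho})$ by \eqref{phase estimate}). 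Two smaller corrections to (5): the identity $J_\pm^k(J_\pm^\ell)^*=0$ needs no phase geometry or change of variables -- it is the pointwise amplitude identity $s_\pm^k(x,\x)s_\pm^\ell(y,\x)=0$, since at the \emph{same} $\x$ the distinct branches give $P_k(\x)P_\ell(\x)=0$; and compactness of $(J_\pm^k)^*J_\pm^\ell$ is deduced in the paper from (4), using $J_\pm^k=J_\pm^k\psi_k(D_x)$ with $\psi_k=P_k$ on $\supp\chi_k$ and compactness of $[J_\pm^k,\psi_k(D_x)]=\jap{x}^{-\rho}\bigl(\jap{x}^{\rho}[J_\pm^k,\psi_k(D_x)]\bigr)$, rather than by placing the composition in a decaying symbol class -- your route could be made to work, but only after the same Poisson-summation reduction. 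Without that reduction (or an equivalent discrete FIO calculus built from scratch), your plan stalls precisely at the step you flagged as the main technical work.
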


\begin{proof}
(1) We compute 
\begin{align*}
J_\pm^k (J_\pm^k)^* u(x) = (2\pi)^{-d} \int_{\T^d} \sum_{y\in\Z^d} e^{i(\f_{\pm}^k(x,\x)-\f_{\pm}^k(y,\x))} s_\pm^k(x,\x) s_\pm^k(y,\x) u(y) d\x .
\end{align*}
We set $\f_{\pm}^k(x,\x)-\f_{\pm}^k(y,\x) = (x-y) \cdot \z(\x;x,y)$, where
\begin{align*}
\z(\x;x,y) := \int_0^1 \nabla_x \f_{\pm}^k(y+\theta(x-y),\x) d\theta.
\end{align*}
Then Proposition \ref{phase functions} implies that the mapping $\x \mapsto \z(\x;x,y)$ is a diffeomorphism from $U$ into $\z(U)$ for any $x,y \in \Z^d$.
%
%
Thus we have
\begin{align*}
J_\pm^k (J_\pm^k)^* u(x) 
= (2\pi)^{-d} \int_{\T^d} \sum_{y\in\Z^d} e^{i(x-y)\cdot\z} t_\pm^k(x,y,\z) u(y) d\z ,
\end{align*}
where
\begin{align*}
t_\pm^k(x,y,\z):=s_\pm^k(x,\x(\z;x,y)) s_\pm^k(y,\x(\z;x,y)) \left|\det\left( \frac{d\x}{d\z} \right)\right| .
\end{align*}
Since $\left| \frac{d\z}{d\x}(\x) - I \right| < \frac{1}{2}$ by Proposition \ref{phase functions}, \eqref{s is in S^0} implies $|\partial_{\z}^\a t_\pm^k(x,y,\z)|\leq C_\a$ for any $\a$. 
Therefore $J_\pm^k$ are bounded by Lemma \ref{PDO lemma}.

(2) 
The same argument as in (1) implies 
\begin{align*}
&\left( J_\pm^k (J_\pm^k)^* - s_\pm^k(x,D_x) s_\pm^k(x,D_x)^* \right) u(x) \\
=& (2\pi)^{-d} \int_{\T^d} \sum_{y\in\Z^d} e^{i(x-y)\cdot\z} r(x,y,\z) u(y) d\z ,
\end{align*}
where
\begin{align*}
r(x,y,\z)=t_\pm^k(x,y,\z) - s_\pm^k(x,\z) s_\pm^k(y,\z) .
\end{align*}
Since $|\partial_\x^\a r(x,\z,y)| \leq C_\a' \la x \ra^{-\rho}$, Lemma \ref{PDO lemma} implies the boundedness of \eqref{J2}.

The other case \eqref{J3} can be treated similarly if we consider the justification of PDO calculus; the argument using Poisson's summation formula as in \cite[Lemma 7.1]{N3} (see also \cite[Lemma 2.3]{T1}) implies
\begin{align*}
&\mathcal{F} (J_\pm^k)^* J_\pm^k \mathcal{F}^* f(\x) \\
 =& (2\pi)^{-d} \int_{\R^d} \int_{\T^d} e^{i(-\f_{\pm}^k(x,\x)+\f_{\pm}^k(x,\eta))} s_\pm^k(x,\x) s_\pm^k(x,\eta) f(\eta) d\eta dx
 + K_1 f(\x), \\
&\mathcal{F} s_\pm^k(x,D_x)^* s_\pm^k(x,D_x) \mathcal{F}^* f(\x) \\
 =& (2\pi)^{-d} \int_{\R^d} \int_{\T^d} e^{i x\cdot(-\x+\eta)} s_\pm^k(x,\x) s_\pm^k(x,\eta) f(\eta) d\eta dx
 + K_2 f(\x), 
\end{align*}
where $K_j$, $j=1,2$, is a smoothing operator in the sense that $\jap{D_x}^N K_j \in \mathcal{B}(\mathcal{H})$ for any $N>0$.
Then by changing variables $x\mapsto \int_0^1 \nabla_\x\f_\pm^k(x,\x+\theta(\eta-\x))d\theta$, PDO calculus on $\T^d$ implies the boundedness of \eqref{J3}.

(3) By a complex interpolation argument, it suffices to show (\ref{JX}) for $q\in 2\Z_+$.
Note that for $\a\in\Z_+^d$
\begin{align*}
& J_\pm^k x^\a u(x) \\
 =& (2\pi)^{-\frac{d}{2}} \int_{\T^d} e^{i\f_{\pm}^k(x,\x)}s_\pm^k(x,\x) i^{|\a|}\partial_{\x}^\a \mathcal{F}u(\x) d\x \nonumber\\
 =& (-i)^{|\a|} (2\pi)^{-\frac{d}{2}} \int_{\T^d} \partial_{\x}^\a (e^{i\f_{\pm}^k(x,\x)} s_\pm^k(x,\x) ) \mathcal{F}u(\x) d\x . \nonumber
\end{align*}
Then we learn for any $N\in\Z_+$,
\begin{align*}
J_\pm^k \jap{x}^{2N} u(x)
 = (2\pi)^{-\frac{d}{2}} \int_{\T^d} e^{i\f_{\pm}^k(x,\x)} (L^{N}s_\pm^k)(x,\x) \mathcal{F}u(\x) d\x ,
\end{align*}
where $L := \jap{\nabla_\x\f_\pm^k}^2 - i\Delta_\x\f_\pm^k - 2i\jap{\nabla_\x\f_\pm^k,\nabla_\x} - \Delta_\x$.
Since
\begin{align*}
\left|\partial_\x^\b(L^{N}s_\pm^k)(x,\x)\right| \leq C_{p,\b,N} \jap{x}^N
\end{align*}
for any $\b\in\Z_+^d$, we have the boundedness of \eqref{JX}.


(4) 
It suffices to show the boundedness of $\jap{D_\x}^\rho [\hat{J}_\pm^k, \psi(\x)]$ as an operator on $L^2(\T^d;\C^n)$, where $\hat{J}_\pm^k:=\mathcal{F}J_\pm^k\mathcal{F}^*$. 
Direct computation imply
\begin{align*}
&\la D_\x \ra^\rho [\hat{J}_\pm^k, \psi(\x)]f(\x) \\
=& (2\pi)^{-d} \sum_{x\in\Z^d}\int_{\T^d} e^{i(-x\cdot\x+\f_\pm^k(x,\y))} \jap{x}^\rho (\psi(\y)-\psi(\x)) s_\pm^k(x,\y) f(\y) d\y \nonumber \\
=& (2\pi)^{-d} \sum_{x\in\Z^d}\int_{\T^d} e^{i(-x\cdot\x+\f_\pm^k(x,\y))} \jap{x}^\rho \Psi_1(x,\y) s_\pm^k(x,\y) f(\y) d\y \nonumber \\
&+ (2\pi)^{-d} \sum_{x\in\Z^d}\int_{\T^d} e^{i(-x\cdot\x+\f_\pm^k(x,\y))} \jap{x}^\rho \Psi_2(x,\x,\y) s_\pm^k(x,\y) f(\y) d\y , \nonumber
\end{align*}
where
\begin{align*}
&\Psi_1(x,\y):= \psi(\y)-\psi(\nabla_x\f_\pm^k(x,\y)), \\
&\Psi_2(x,\x,\y):= \psi(\nabla_x\f_\pm^k(x,\y))-\psi(\x).
\end{align*}
The first term is treated similarly to (2), since $|\partial_\y^\a \Psi_1(x,\y)| \leq C_\a \jap{x}^{-\rho}$ by \eqref{phase estimate}. 
For the second term, we first employ the argument in the proof of boundedness of \eqref{J3} to replace the summation over $\Z^d$ by the integral on $\R^d$ modulo smoothing operators.
Then, since
\begin{align*}
\Psi_2(x,\x,\y) 
= (\nabla_x\f_\pm^k(x,\y)-\x)\cdot\int_0^1 \nabla_\x\psi(\x+\theta(\nabla_x\f_\pm^k(x,\y)-\x)) d\theta,
\end{align*}
we have
\begin{align*}
&(2\pi)^{-d} \int_{\R^d}\int_{\T^d} e^{i(-x\cdot\x+\f_\pm^k(x,\y))} \jap{x}^\rho \Psi_2(x,\x,\y) s_\pm^k(x,\y) f(\eta) d\eta dx \\
=& i(2\pi)^{-d} \int_{\R^d}\int_{\T^d} e^{i(-x\cdot\x+\f_\pm^k(x,\y))} a(\x,\y,x) f(\y) d\y dx ,
\end{align*}
where
\begin{align*}
a(\x,\y,x)=\nabla_x \cdot \left( \jap{x}^\rho s_\pm^k(x,\y) \int_0^1 \nabla_\x\psi(\x+\theta(\nabla_x\f_\pm^k(x,\y)-\x)) d\theta \right)
\end{align*}
satisfies $|\partial_\x^\a\partial_\y^\b\partial_x^\g a(\x,\y,x)|\leq C_{\a,\b,\g}$.
Finally we apply \cite[Theorem 2.1]{As-F} to obtain the boundedness of the second term.

(5) The first assertion follows from $s_k^\pm(x,\x)s_\ell^\pm(y,\x)=0$ for any $x$, $y$ and $\x$.
For the second assertion, we set $\psi_k\in C^\infty(\T^d;M_n(\C))$ so that $\psi_k(\x)=P_k(\x)$ on $\supp \chi_k$.
Then we use the equality $J_\pm^k=J_\pm^k \psi_k(D_x)$ and compactness of $[J_\pm^k,\psi_k(D_x)]$, which follows from (4).

\end{proof}

Now we prove the existence of the following (inverse) local wave operators
\begin{align}
 W^\pm(\mathcal{J}) :=& \slim e^{itH} \mathcal{J} e^{-itH_0} E_{H_0}(\G), \label{local mWO}\\
 I^\pm(\mathcal{J}) :=& \slim e^{itH_0} \mathcal{J}^* e^{-itH} E_H^{\textrm{ac}}(\G), \label{i local mWO}
\end{align}
for $\mathcal{J}=J_\#^k$ with $k=1,\dots,K$ and $\#\in\{+,-\}$.
Note that, if $\mathcal{J}$ is compact, then $W^\pm(\mathcal{J})=I^\pm(\mathcal{J})=0$.

We set $\tilde\chi_k \in C_c^\infty(\mathcal{U}_k)$ so that $\tilde\chi_k=1$ on $\supp\chi_k$.
Since
\begin{align*}
(P_k \tilde\chi_k)(D_x) J_\#^k - J_\#^k 
 = [(P_k \tilde\chi_k)(D_x), J_\#^k]
\end{align*}
is compact by Lemma \ref{J lem} (4), we have
\begin{align*}
W^\pm(J_\#^k) =& W^\pm((P_k \tilde\chi_k)(D_x) J_\#^k) , \\
I^\pm(J_\#^k) =& I^\pm((P_k \tilde\chi_k)(D_x) J_\#^k) ,
\end{align*}
and thus it suffices to show the existence of \eqref{local mWO} and \eqref{i local mWO} for
\begin{align*}
\mathcal{J}=(P_k \tilde\chi_k)(D_x) J_\#^k.
\end{align*}




\begin{lem}\label{HJ-JH_0 lem}
\begin{align*}
&(H (P_k \tilde\chi_k)(D_x)J_\pm^k - (P_k \tilde\chi_k)(D_x)J_\pm^k H_0)u(x)  \\
=& (2\pi)^{-d}\int_{\mathbb{T}^d}\sum_{y\in\mathbb{Z}^d}e^{i(\f_\pm^k(x,\x)-y\cdot\xi)}a_\pm^k(x,\xi)u(y)d\xi, \nonumber
\end{align*}
where
\begin{align}\label{s_a def}
&a_\pm^k(x,\x) \\
=& - i \y(x)\s_\pm'(\cos(x,\nabla_\x \l_k(\x))) \frac{|\nabla_\x \l_k(\x)|^2 - |x|^{-2}  (x\cdot\nabla_\x \l_k(\x))^2}{|x| |\nabla_\x \l_k(\x)|} P_k(\x) \chi_k(\x) \nonumber \\
& + r_\pm^k(x,\x) \nonumber
\end{align}
and $|\partial_\x^\b r_\pm^k(x,\x)| \leq C_\b \la x \ra^{-\min(1+\rho,2)}$.
\end{lem}

\begin{proof}
\textbf{Step 1.}
Let
\begin{align*}
g(x):=&(2\pi)^{-d}\int_{\T^d}e^{ix\cdot\x}H_0(\x)P_k(\x)\tilde\chi_k(\x)d\x \\
=&(2\pi)^{-d}\int_{\T^d}e^{ix\cdot\x}\lambda_k(\x)P_k(\x)\tilde\chi_k(\x)d\x .
\end{align*}
Then we learn
\begin{align*}
H_0 (P_k \tilde\chi_k)(D_x)J_\pm^k u(x)
=& (2\pi)^{-d}\int_{\mathbb{T}^d}\sum_{y\in\mathbb{Z}^d}e^{i(\f_\pm^k(x,\x)-y\cdot\xi)}a_\pm^{k,1}(x,\xi)u(y)d\xi,
\end{align*}
where
\begin{align*}
a_\pm^{k,1}(x,\xi) =& \sum_{y\in\Z^d} g(y) e^{i(\f_\pm^k(x-y,\x)-\f_\pm^k(x,\x))}s_\pm^k(x-y,\x) \\
=& \sum_{y\in\Z^d} g(y) e^{-iy\cdot\nabla_x\f_\pm^k(x,\x)} (1+R(x,y,\x)) s_\pm^k(x-y,\x),
\end{align*}
and
\begin{align*}
R(x,y,\x) := \exp \left[i\left( \f_\pm^k(x-y,\x)-\f_\pm^k(x,\x) + y\cdot\nabla_x\f_\pm^k(x,\x) \right)\right] -1.
\end{align*}
Since
\begin{align*}
&\left|\partial_\x^\b \left[\f_\pm^k(x-y,\x)-\f_\pm^k(x,\x) + y\cdot \nabla_x\f_\pm^k(x,\x) \right]\right| \\
=& \left| y \cdot \int_0^1 \partial_\x^\b \left(\nabla_x\f_\pm^k(x,\x) - \nabla_x\f_\pm^k(x-\theta y, \x) \right) d\theta \right| \\
=& \left| y \cdot \int_0^1 \left( \int_0^1 \partial_\x^\b \nabla_x^2\f_\pm^k(x-\phi\theta y, \x) d\phi \right) \theta y d\theta \right| \\
\leq& C_\b \la x \ra^{-1-\rho} \jap{y}^{3+\rho} ,
\end{align*}
we learn $\left|\partial_\x^\b R(x,y,\x)\right|\leq C_\b' \la x \ra^{-1-\rho} \jap{y}^{(3+\rho)\max\{1,|\b|\}}$, and thus
\begin{align*}
\left|\partial_\x^\b\sum_{y\in\Z^d} g(y) e^{-iy\cdot\nabla_x\f_\pm^k(x,\x)} R(x,y,\x) s_\pm^k(x-y,\x)\right|\leq C''_\b \la x \ra^{-1-\rho}.
\end{align*}
Furthermore, since \eqref{s is in S^0} implies the similar inequality
\begin{align*}
& \left|\partial_\x^\b \left[s_\pm^k(x-y,\x) - s_\pm^k(x,\x) + y \cdot \nabla_x s_\pm^k(x,\x)\right]\right| \\
=& \left| y \cdot\int_0^1 \left(\int_0^1 \partial_\x^\b\nabla_x^2 s_\pm^k(x-\phi\theta y,\x) d\phi\right) \theta y d\theta \right| \\
\leq& C_\b \jap{x}^{-2}\jap{y}^4 ,
\end{align*}
we have
\begin{align*}
&\sum_{y\in\Z^d} g(y) e^{-iy\cdot\nabla_x\f_\pm^k(x,\x)}s_\pm^k(x-y,\x) \\
=& \sum_{y\in\Z^d} g(y) e^{-iy\cdot\nabla_x\f_\pm^k(x,\x)} \left( s_\pm^k(x,\x) - y\cdot\nabla_x s_\pm^k(x,\x) \right) + O(\la x \ra^{-2}) \\
=& (\lambda_kP_k\tilde\chi_k)(\nabla_x\f_\pm^k(x,\x)) s_\pm^k(x,\x) - i \nabla_\x (\lambda_kP_k\tilde\chi_k)(\nabla_x\f_\pm^k(x,\x))\cdot\nabla_x s_\pm^k(x,\x) \\
 & \quad\quad + O(\la x \ra^{-2}).
\end{align*}
Thus we obtain
\begin{align*}
&a_\pm^{k,1}(x,\xi) \\
=& (\lambda_kP_k\tilde\chi_k)(\nabla_x\f_\pm^k(x,\x)) s_\pm^k(x,\x) - i \nabla_\x (\lambda_kP_k\tilde\chi_k)(\nabla_x\f_\pm^k(x,\x))\cdot\nabla_x s_\pm^k(x,\x) \\
 & \quad\quad + O(\la x \ra^{-\min(1+\rho,2)}).
\end{align*}

Similar computations imply that
\begin{align*}
V (P_k \tilde\chi_k)(D_x)J_\pm^k u(x)
=& (2\pi)^{-d}\int_{\mathbb{T}^d}\sum_{y\in\mathbb{Z}^d}e^{i(\f_\pm^k(x,\x)-y\cdot\xi)}a_\pm^{k,2}(x,\xi)u(y)d\xi, \\
(P_k \tilde\chi_k)(D_x)J_\pm^k H_0 u(x) 
=& (2\pi)^{-d}\int_{\mathbb{T}^d}\sum_{y\in\mathbb{Z}^d}e^{i(\f_\pm^k(x,\x)-y\cdot\xi)}a_\pm^{k,3}(x,\xi)u(y)d\xi,
\end{align*}
where
\begin{align*}
&a_\pm^{k,2}(x,\xi) \\
=& V(x) \left((P_k\tilde\chi_k)(\nabla_x\f_\pm^k(x,\x)) s_\pm^k(x,\x) - i \nabla_\x (P_k\tilde\chi_k)(\nabla_x\f_\pm^k(x,\x))\cdot\nabla_x s_\pm^k(x,\x) \right)  \\
 & \quad\quad + O(\la x \ra^{-\rho-\min(1+\rho,2)}) , \\
&a_\pm^{k,3}(x,\xi) \\
=& \lambda_k(\x) (P_k\tilde\chi_k)(\nabla_x\f_\pm^k(x,\x)) s_\pm^k(x,\x) - i \lambda_k(\x) \nabla_\x (P_k\tilde\chi_k)(\nabla_x\f_\pm^k(x,\x))\cdot\nabla_x s_\pm^k(x,\x) \\
 & \quad\quad + O(\la x \ra^{-\min(1+\rho,2)}).
\end{align*}

\textbf{Step 2.}
Step 1 implies
\begin{align*}
&a_\pm^k(x,\x) \\
=& (P_k\tilde\chi_k)(\nabla_x\f_\pm^k(x,\x)) s_\pm^k(x,\x) (\lambda_k(\nabla_x\f_\pm^k(x,\x))+V(x)-\lambda_k(\x)) \\
&-i \nabla_\x (P_k\tilde\chi_k)(\nabla_x\f_\pm^k(x,\x))\cdot \nabla_x s_\pm^k(x,\x) (\lambda_k(\nabla_x\f_\pm^k(x,\x))+V(x)-\lambda_k(\x)) \\
&-i (P_k\tilde\chi_k)(\nabla_x\f_\pm^k(x,\x))\nabla_\x \lambda_k(\nabla_x\f_\pm^k(x,\x))\cdot\nabla_x s_\pm^k(x,\x) \\
&  +O(\la x \ra^{-\min(1+\rho,2)}).
\end{align*}
The first and second terms are of order $\jap{x}^{-1-\rho}$ by \eqref{eikonal} and \eqref{short range}.
Moreover simple computations imply that, setting $v:=\nabla_\x\lambda_k(\x)$,
\begin{align*}
&\nabla_x s_\pm^k(x,\x) \\
=& \eta(x)\s_\pm'\left( \cos(x,v) \right) \left( \frac{1}{|x||v|}v - \frac{x\cdot v}{|x|^3|v|}x \right) P_k(\x) \chi_k(\x) +O(\jap{x}^{-\infty}) ,
\end{align*}
and therefore
\begin{align*}
&a_\pm^k(x,\x) \\
=& -i (P_k\tilde\chi_k)(\x)\nabla_\x \lambda_k(\x)\cdot\nabla_x s_\pm^k(x,\x) + O(\la x \ra^{-\min(1+\rho,2)}) \\
=& -i \eta(x)\s_\pm'\left( \cos(x,v) \right) \left(\frac{|v|}{|x|}-\frac{(x\cdot v)^2}{|x|^3|v|}\right) P_k(\x) \chi_k(\x) + O(\la x \ra^{-\min(1+\rho,2)}) .
\end{align*}
Here we have used \eqref{phase estimate} in the first equality to replace $\nabla_x\f_\pm^k(x,\x)$ by $\x$.

\end{proof}

\begin{prop}\label{existence of local WO}
For any $k=1,\dots,K$, there exist the limits \eqref{local mWO} and \eqref{i local mWO} with $\mathcal{J}=J_\pm^k$.
\end{prop}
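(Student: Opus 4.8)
The plan is to prove the existence of both limits \eqref{local mWO} and \eqref{i local mWO} by Cook's method, using the key commutator estimate from Lemma \ref{HJ-JH_0 lem} together with the local $H$-smoothness results from Section \ref{Mourre section}. As reduced in the discussion preceding Lemma \ref{HJ-JH_0 lem}, it suffices to treat $\mathcal{J}=P_k(D_x)\tilde\chi_k(D_x)J_\pm^k$. I would begin with the existence of $W^\pm(J_\pm^k)$. By the standard Cook argument, it is enough to show that for $u$ in a dense subset (say $E_{H_0}(\G)\mathcal{H}$ with $\psi$-cutoff in energy) the integral
\begin{align*}
\int_{\pm 1}^{\pm\infty} \left\| \frac{d}{dt}\left( e^{itH}\mathcal{J}e^{-itH_0}E_{H_0}(\G)u \right) \right\| dt
= \int_{\pm 1}^{\pm\infty} \left\| e^{itH}(H\mathcal{J}-\mathcal{J}H_0)e^{-itH_0}E_{H_0}(\G)u \right\| dt
\end{align*}
is finite. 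The point is that Lemma \ref{HJ-JH_0 lem} identifies $H\mathcal{J}-\mathcal{J}H_0$ as an oscillatory integral operator whose symbol $a_\pm^k$ has a principal part proportional to $\s_\pm'(\cos(x,\nabla_\x\l_k(\x)))$ plus a remainder $r_\pm^k$ of order $\jap{x}^{-\min(1+\rho,2)}$.

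For the remainder term, since $\min(1+\rho,2)>1$, I would factor $\la x\ra^{-\min(1+\rho,2)}=\la x\ra^{-1/2-\delta}\cdot\la x\ra^{-1/2+\delta'}$ and insert the local $H$-smoothness of $\la x\ra^{-s}$ for $s>\tfrac12$ (the Corollary after Proposition \ref{LAP prop}), combined with the propagation estimate $\|\la x\ra^{-s}e^{-itH_0}E_{H_0}(\G)u\|\in L^2(dt)$ on the free side, to obtain an $L^2$-in-$t$ bound on each factor and hence integrability. For the principal term carrying $\s_\pm'$, the crucial observation is that on $\supp\s_\pm'$ the quantity $|\nabla_\x\l_k|^2-|x|^{-2}(x\cdot\nabla_\x\l_k)^2$ is exactly the squared length of the component of $\nabla_\x\l_k$ transverse to $x$; this is precisely the symbol controlled by the radiation operator $\nabla_{k,j}^\perp$ of the second Proposition of Section \ref{Mourre section}. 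Thus I would recognize the principal part of $a_\pm^k$ as (a multiple of) $\la x\ra^{-1/2}$ times the radiation symbol $\la x\ra^{-1/2}\nabla_{k,j}^\perp$, so that the $H_0$-side factor is $H_0$-smooth by the radiation estimate and the $e^{itH}$-side factor is handled by the $H$-smoothness of $\la x\ra^{-1/2}\nabla_{k,j}^\perp$; pairing the two via Kato smoothness again yields integrability in $t$.

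For the inverse wave operators $I^\pm(J_\pm^k)$ in \eqref{i local mWO}, I would run the symmetric Cook argument, differentiating $e^{itH_0}\mathcal{J}^*e^{-itH}E_H^{\textrm{ac}}(\G)$ and using the adjoint relation $(H\mathcal{J}-\mathcal{J}H_0)^*=\mathcal{J}^*H-H_0\mathcal{J}^*$, so that the same operator $a_\pm^k(x,D_x)$ appears with roles of $H$ and $H_0$ exchanged; here the $H$-smoothness results and the radiation estimate on the $H$-side (valid on $\G\Subset\G(H_0)\backslash\s_{pp}(H)$) together with the $H_0$-propagation bounds close the estimate. The main obstacle I anticipate is the principal $\s_\pm'$ term: one must match the algebraic form of its symbol to the radiation operator $\nabla_{k,j}^\perp$ precisely enough that the transverse-gradient cancellation is exploited, rather than crudely bounding it by $\jap{x}^{-1}$ (which would only be borderline non-integrable). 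This matching, carried out via Lemma \ref{composition lemma} to control lower-order corrections and the identity \eqref{differentiation across trajectory}, is where the real work lies; once it is in place, the remaining estimates are routine applications of Kato's smooth perturbation theory as recorded in Section \ref{notations sec}.
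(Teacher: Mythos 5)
Your overall strategy coincides with the paper's: reduce to $\mathcal{J}=(P_k\tilde\chi_k)(D_x)J_\pm^k$, use Lemma \ref{HJ-JH_0 lem} to split $H\mathcal{J}-\mathcal{J}H_0$ into a principal part carrying $\s_\pm'$ and a remainder of order $\jap{x}^{-\min(1+\rho,2)}$, absorb the remainder with weights $\jap{x}^{-s}$, $s>\frac12$, and recognize the principal symbol as the radiation symbols squared --- indeed the paper writes exactly the identity $a_\pm^k=b_\pm^k\sum_{j=1}^d (a_j^k)^2+r_\pm^k$ that you anticipate. But your reduction step contains a genuine gap: Cook's method requires $\|(H\mathcal{J}-\mathcal{J}H_0)e^{-itH_0}u\|\in L^1(dt)$, whereas Kato smoothness only yields $L^2$-in-$t$ bounds, and an $L^2$ bound on ``each factor'' of an operator product does not make the norm of the image integrable --- inside that norm there is only the free evolution, so there is nothing to pair against. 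The pairing you invoke lives in the sesquilinear form $\int \bigl(B_j G_j' e^{-itH_0}u,\,G_j e^{-itH}v\bigr)\,dt$, i.e.\ in Kato's smooth-perturbation theorem, which is what the paper actually uses: it suffices that $H\mathcal{J}-\mathcal{J}H_0=\sum_j G_j^* B_j G_j'$ with $G_j$ locally $H$-smooth, $G_j'$ locally $H_0$-smooth on $\G$, and $B_j$ bounded; this gives \eqref{local mWO} and \eqref{i local mWO} simultaneously. The distinction is not cosmetic for \eqref{i local mWO}: no nonstationary-phase or propagation decay is available for $e^{-itH}$, so a norm-$L^1$ Cook integral is out of reach there. (Also, in your splitting $\jap{x}^{-1/2-\delta}\cdot\jap{x}^{-1/2+\delta'}$ the second exponent must still lie strictly below $-\frac12$; the paper takes both equal to $-\frac{1+\rho}{2}$, assuming $\rho<1$ without loss of generality.)

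The second point you pass over is that $H\mathcal{J}-\mathcal{J}H_0$ is an oscillatory-integral operator with phase $\f_\pm^k$, not a pseudodifference operator, so the locally $H$-smooth left factor cannot literally be $\jap{x}^{-1/2}\nabla_{k,j}^\perp$; the factorization must respect the phase. The paper's device is to introduce a modified phase $\tilde\f_\pm^k$ (equal to $x\cdot\x$ outside the relevant region) and auxiliary operators $\tilde J_\pm^k$, $A_{\pm,j}^k$, $C_{\pm,j}^k$, then use the calculus of Lemma \ref{J lem} to insert $\tilde J_\pm^k(\tilde J_\pm^k)^*\approx\tilde s^k(x,D_x)^2\approx 1$ microlocally, arriving at $\sum_j A_{\pm,j}^k\, b_\pm^k(x,D_x)\, a_j^k(x,D_x)$ modulo terms of the form $\jap{x}^{-\frac{1+\rho}{2}}B\jap{x}^{-\frac{1+\rho}{2}}$. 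The local $H$-smoothness of the oscillatory-integral factor $A_{\pm,j}^k$ is then reduced to the radiation estimate via $(A_{\pm,j}^k)^*A_{\pm,j}^k=a_j^k(x,D_x)^*a_j^k(x,D_x)+R$ with a short-range remainder. You correctly flagged the symbol matching as ``where the real work lies,'' but Lemma \ref{composition lemma} alone cannot carry it out --- the phase-respecting insertion is the missing idea. With the Cook framing replaced by the factorization theorem and this step supplied, your outline becomes the paper's proof.
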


\begin{proof}
We only prove the existence of \eqref{local mWO}, since the other is done in the same way.

We may assume $\rho<1$ without loss of generality.
The standard argument of existence of (modified) wave operators (see, e.g., \cite[Lemmas 10.2.1 and 10.2.2, Theorem 0.5.4]{Y} and \cite[Theorem XIII. 24]{R-S4}) implies that it suffices to prove that $H (P_k \tilde\chi_k)(D_x)J_\pm^k - (P_k \tilde\chi_k)(D_x)J_\pm^k H_0$ is a finite sum of  the form $G_j^* B_j G'_j$ with $G_j$ (resp.\ $G'_j$) being $H$-(resp.\ $H_0$-) smooth in $\Gamma$ and $B_j\in\mathcal{B}(\mathcal{H})$.

We set
\begin{align*}
a_j^k(x,\x) =& \y(x)|x|^{-\frac{1}{2}}\left(\partial_{\x_j}\l_k(\x)-|x|^{-2} x_j ( x\cdot\nabla_\x\l_k(\x))\right) P_k(\x) \tilde\chi_k(\x), \\
b_{\pm}^k(x,\x) =& -i \y(x)\s_\pm'\left( \cos(x,\nabla_\x\lambda_k(\x)) \right) P_k(\x) \chi_k(\x).
\end{align*}
Then we observe that 
\begin{align*}
a_j^k(x,D_x)=\y(x)|x|^{-\frac12}\nabla_{k,j}^\perp ,
\end{align*}
where $\nabla_{k,j}^\perp$ is as in Proposition \ref{prop_rad_est}.
Moreover we have by the definition \eqref{s_a def} of $a_\pm^k(x,\x)$
\begin{align*}
a_\pm^k(x,\x)=b_\pm^k(x,\x) \sum_{j=1}^d a_j^k(x,\x)^2 + r_\pm^k(x,\x) ,
\end{align*}
where $\partial_\x^\a r_\pm^k(x,\x) = O(\jap{x}^{-2})$.

We take functions $\tilde{\tilde\chi}_k \in C_c^\infty(\mathcal{U}_k)$ and $\tilde\s_\pm(\theta) \in C^\infty(\R)$ such that
\begin{align*}
&\tilde\s_\pm(\theta)=
	\begin{cases}
		1 & \quad \text{if } \pm\theta \geq -\frac{1}{2}, \\ 
		0 & \quad \text{if } \pm\theta \geq -\frac{3}{4}, 
	\end{cases}
\\
&\tilde{\tilde\chi}_k(\x) = 1 , \quad \x\in \supp\tilde\chi_k .
\end{align*}
We set
\begin{align*}
\tilde s^k(x,\x) =& \eta(x) P_k(\x) \tilde{\tilde\chi}_k(\x), \\
\tilde \f_{\pm}^k(x,\x) =& \y(x)\tilde\s_\pm(\cos(x,\nabla\lambda_k(\x))) \f_{\pm}^k(x,\x) \\
& \quad + \left( 1 - \y(x)\tilde\s_\pm(\cos(x,\nabla\lambda_k(\x))) \right) x\cdot\x,
\end{align*}
and
\begin{align*}
\tilde J_\pm^k u(x) =& (2\pi)^{-\frac{d}{2}} \int_{\T^d} e^{i\tilde\f_{\pm}^k(x,\x)} \tilde s^k(x,\x) \mathcal{F}u(\x) d\x , \\
A_{\pm,j}^ku(x)=&(2\pi)^{-\frac{d}{2}}\int_{\T^d} e^{i\tilde\f_\pm^k(x,\x)} a_j^k(x,\x) \mathcal{F}u(\x)d\x, \\
C_{\pm,j}^ku(x)=&(2\pi)^{-\frac{d}{2}}\int_{\T^d} e^{i\f_\pm^k(x,\x)} b_\pm^k(x,\x)a_j^k(x,\x)^2 \mathcal{F}u(\x)d\x.
\end{align*}
Then it follows from the same argument as Lemma \ref{J lem} (2) that
\begin{align*}
&\tilde J_\pm^k (\tilde J_\pm^k)^* = \tilde s^k(x,D_x)^2 +R_{\pm,j,1}^k, \\
&(\tilde J_\pm^k)^*A_{\pm,j}^k = a_j^k(x,D_x)+R_{\pm,j,2}^k, \\
&(\tilde J_\pm^k)^*C_{\pm,j}^k = a_j^k(x,D_x) b_\pm^k(x,D_x) a_j^k(x,D_x)+R_{\pm,j,3}^k, 
\end{align*}
where $\jap{x}^{\frac{1+\rho}{2}} R_{\pm,j,\ell}^k \jap{x}^{\frac{1+\rho}{2}} \in \mathcal{B}(\HS)$, $\ell=1$, $2$, $3$.
Moreover we learn by the argument in Lemma \ref{J lem} (4) that
\begin{align*}
&\tilde s^k(x,D_x)^2 A_{\pm,j}^k = A_{\pm,j}^k + R_{\pm,j,4}^k, \\
&\tilde s^k(x,D_x)^2 C_{\pm,j}^k = C_{\pm,j}^k + R_{\pm,j,5}^k, 
\end{align*}
where $\jap{x}^{\frac{1+\rho}{2}} R_{\pm,j,\ell}^k \jap{x}^{\frac{1+\rho}{2}}\in \mathcal{B}(\HS)$, $\ell=4$, $5$.
Thus we have, modulo operators of the form $\jap{x}^{-\frac{1+\rho}{2}}B\jap{x}^{-\frac{1+\rho}{2}}$ with $B\in\mathcal{B}(\mathcal{H})$,
\begin{align*}
&H (P_k \tilde\chi_k)(D_x)J_\pm^k - (P_k \tilde\chi_k)(D_x)J_\pm^k H_0 \\
\equiv& \sum_{j=1}^d C_{\pm,j}^k \\
\equiv& \sum_{j=1}^d \tilde s^k(x,D_x)^2 C_{\pm,j}^k \\
\equiv& \sum_{j=1}^d \tilde J_\pm^k (\tilde J_\pm^k)^* C_{\pm,j}^k \\
\equiv& \sum_{j=1}^d \tilde J_\pm^k a_j^k(x,D_x) b_\pm^k(x,D_x) a_j^k(x,D_x) \\
\equiv& \sum_{j=1}^d \tilde J_\pm^k (\tilde J_\pm^k)^* A_{\pm,j}^k b_\pm^k(x,D_x) a_j^k(x,D_x) \\
\equiv& \sum_{j=1}^d \tilde s^k(x,D_x)^2 A_{\pm,j}^k b_\pm^k(x,D_x) a_j^k(x,D_x) \\
\equiv& \sum_{j=1}^d A_{\pm,j}^k b_\pm^k(x,D_x) a_j^k(x,D_x) .
\end{align*}

Since $b_\pm^k(x,D_x) \in \mathcal{B}(\HS)$ and Proposition \ref{prop_rad_est} implies $a_j^k(x,D_x)$ is $H_0$-smooth on $\G$,
it remains to prove that $A_{\pm,j}^k$ is $H$-smooth on $\G$.
However, the proof is completed if we observe that $a_j^k(x,D_x)$ and $\jap{x}^{\frac{1+\rho}{2}}$ are $H$-smooth on $\G$ and that
\begin{align*}
(A_{\pm,j}^k)^* A_{\pm,j}^k=a_j^k(x,D_x)^*a_j^k(x,D_x) + R''_j,
\end{align*}
where $\jap{x}^{\frac{1+\rho}{2}}R_j''\jap{x}^{\frac{1+\rho}{2}}\in\mathcal{B}(\mathcal{H})$.
\end{proof}


%
%

\section{Proof of Theorem \ref{main}} \label{proof section}


We set
\begin{align}\label{time-independent modifiers}
J_\pm := \sum_{k=1}^K J_\pm^k ,
\end{align}
where $J_\pm^k$'s are given by \eqref{IK1}.
Then Proposition \ref{existence of local WO} implies the existence of the modified wave operators \eqref{modified WOs}.
The proof of the intertwining property is skipped since it is easily proved.



\begin{prop} \label{nonstationary prop}
$W^\pm(J_\mp) = I^\pm(J_\mp) = 0$.
\end{prop}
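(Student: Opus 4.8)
The plan is to treat the two vanishing statements by different mechanisms: the (free) wave operators $W^\pm(J_\mp)$ by a nonstationary phase estimate on the free evolution, and the inverse operators $I^\pm(J_\mp)$ by a duality identity that turns them into adjoints of the former. First I would reduce $W^\pm(J_\mp)=0$ to a statement about $e^{-itH_0}$ alone. Since $e^{itH}$ is unitary and $J_\mp=\sum_k J_\mp^k$, it suffices to prove
\[
\|J_\mp^k e^{-itH_0}E_{H_0}(\G)u\|\to 0\quad\text{as } t\to\pm\infty
\]
for each $k$ and each $u$ in a dense subset; the existence of the limits (Proposition \ref{existence of local WO}) together with the uniform boundedness of $J_\mp^k e^{-itH_0}$ (Lemma \ref{J lem} (1)) then upgrades this to the strong limit $0$ on $E_{H_0}(\G)\mathcal H$. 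Using \eqref{Rep of E_H0} I would take $u$ with $\mathcal F u$ smooth and frequency-localized in $\cup_k\l_k^{-1}(\G)$, so that on the range of $P_k$ one has $e^{-itH_0}=e^{-it\l_k(D_x)}$ and hence
\[
J_\mp^k e^{-itH_0}E_{H_0}(\G)u(x)=(2\pi)^{-\frac d2}\int_{\T^d}e^{i(\f_\mp^k(x,\x)-t\l_k(\x))}\,a(x,\x)\,d\x
\]
with a smooth amplitude $a$ supported, in $\x$, in $\supp\chi_k$ and, in $x$, in $\supp s_\mp^k$.

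The heart of the matter is the stationary phase analysis of this oscillatory sum. The phase is $\Psi=\f_\mp^k(x,\x)-t\l_k(\x)$, with $\nabla_\x\Psi=\nabla_\x\f_\mp^k(x,\x)-t\nabla_\x\l_k(\x)$. On $\supp s_\mp^k$ the angular cutoff $\s_\mp(\cos(x,\nabla\l_k))$ forces $\mp\,x\cdot\nabla_\x\l_k(\x)\ge-\tfrac12|x|\,|\nabla_\x\l_k(\x)|$, while $|\nabla_\x\l_k|$ is bounded below there (non-threshold energies). Since $\pm t\ge0$ in the limit $t\to\pm\infty$, the cross term obeys $-2t\,x\cdot\nabla_\x\l_k\ge-|t|\,|x|\,|\nabla_\x\l_k|$, so that
\[
|x-t\nabla_\x\l_k|^2=|x|^2-2t\,x\cdot\nabla_\x\l_k+t^2|\nabla_\x\l_k|^2\ge\tfrac12\bigl(|x|^2+t^2|\nabla_\x\l_k|^2\bigr).
\]
Combining this with \eqref{phase estimate}, which gives $|\nabla_\x\f_\mp^k(x,\x)-x|\le C\jap{x}^{1-\rho}$, and recalling that $\eta$ confines $x$ to $|x|\ge R$, I would fix $R$ large enough that $|\nabla_\x\Psi|\ge c(|x|+|t|)$ on $\supp a$ for some $c>0$. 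Because $\partial_\x^\b\nabla_\x\f_\mp^k$ is $O(|x|)$ and $t\,\partial_\x^\b\nabla_\x\l_k$ is $O(|t|)$, the coefficient $\nabla_\x\Psi/|\nabla_\x\Psi|^2$ and all its $\x$-derivatives have size $O\bigl((|x|+|t|)^{-1}\bigr)$, so repeated integration by parts with $L=|\nabla_\x\Psi|^{-2}\,\nabla_\x\Psi\cdot\nabla_\x$ yields
\[
\bigl|J_\mp^k e^{-itH_0}E_{H_0}(\G)u(x)\bigr|\le C_N\jap{|x|+|t|}^{-N}
\]
for every $N$. Summing the square over $x\in\Z^d$ gives a bound $O\bigl(|t|^{d-2N}\bigr)\to0$, which establishes $W^\pm(J_\mp)=0$.

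For the inverse operators I would use a duality identity. Both $W^\pm(J_\mp)$ and $I^\pm(J_\mp)$ exist (Proposition \ref{existence of local WO}), so for all $u$ and all $v\in E_H^{\textrm{ac}}(\G)\mathcal H$, unitarity of $e^{itH}$ and $e^{-itH}v=e^{-itH}E_H^{\textrm{ac}}(\G)v$ give
\[
(W^\pm(J_\mp)u,v)=\lim_{t\to\pm\infty}\bigl(E_{H_0}(\G)u,\,e^{itH_0}(J_\mp)^*e^{-itH}v\bigr)=\bigl(E_{H_0}(\G)u,\,I^\pm(J_\mp)v\bigr).
\]
The standard intertwining relation $e^{isH_0}I^\pm(J_\mp)=I^\pm(J_\mp)e^{isH}$, combined with $I^\pm(J_\mp)=I^\pm(J_\mp)E_H^{\textrm{ac}}(\G)$, yields $E_{H_0}(\G)I^\pm(J_\mp)=I^\pm(J_\mp)$; hence the identity reads $I^\pm(J_\mp)=W^\pm(J_\mp)^*$ on $E_H^{\textrm{ac}}(\G)\mathcal H$. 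Since $W^\pm(J_\mp)=0$ was just proved and $I^\pm(J_\mp)=I^\pm(J_\mp)E_H^{\textrm{ac}}(\G)$, we conclude $I^\pm(J_\mp)=0$.

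I expect the main obstacle to be the stationary phase step of the second paragraph: verifying the uniform lower bound $|\nabla_\x\Psi|\gtrsim|x|+|t|$ on the support (where the sign of the angular cutoff must be matched to the sign of $t$) and, above all, keeping control of the $\x$-derivatives of the amplitude and of $1/|\nabla_\x\Psi|^2$ through the iterated integration by parts in the lattice setting, so as to obtain genuine decay in $x$ and $t$ simultaneously.
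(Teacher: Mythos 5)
Your proposal is correct and follows essentially the same route as the paper: a non-stationary phase estimate yielding $|\nabla_\x(\f_\mp^k(x,\x)-t\l_k(\x))|\geq c(|x|+|t|)$ on the support of the angular cutoff (your law-of-cosines derivation is a cleaned-up version of the paper's displayed chain of inequalities), rapid pointwise decay and square-summation over $\Z^d$ to get $W^\pm(J_\mp)=0$, and then the identical duality/intertwining argument identifying $I^\pm(J_\mp)$ with the adjoint of $W^\pm(J_\mp)$ on $E_H^{\textrm{ac}}(\G)\mathcal{H}$. The only differences are cosmetic: you are more explicit about the dense set of frequency-localized $u$ with smooth $\mathcal{F}u$ and about the lower bound on $|\nabla_\x\l_k|$ coming from the non-threshold condition, both of which the paper uses implicitly.
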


\begin{proof}
For the first assertion, it suffices to prove $\lim_{t\to\pm\infty} J_\mp^k e^{-itH_0}u=0$ for any $u$ satisfying
\begin{align*}
(P_k\chi_k)(D_x)u=u.
\end{align*}
We easily see that
\begin{align*}
&J_\mp^k e^{-itH_0} u(x) \\
=&(2\pi)^{-\frac{d}{2}} \int_{\T^d} e^{i(\f_\mp^k(x,\x)-t\lambda_k(\x))} \eta(x)\s_\mp\left( \cos(x,\nabla\lambda_k(\x)) \right) \mathcal{F}u(\x)d\x.
\end{align*}
The estimate \eqref{phase estimate} and the conditions \eqref{eta 01} and \eqref{sigma 01} imply there is a constant $c>0$ such that on the support of the integrand
\begin{align*}
|\nabla_\x \f_\mp^k(x,\x)-t\nabla\lambda_k(\x)|
\geq& |x-t\nabla\lambda_k(\x)| - |x-\nabla_\x\f_\mp^k(x,\x)| \\
\geq& \sqrt{\frac{1-\cos(x,\pm\nabla\lambda_k(\x))}{2}}|x| |t\nabla\lambda_k(\x)| - C\jap{x}^{1-\rho} \\
\geq& c (|x| + |t| |\nabla\lambda_k(\x)|)
\end{align*}
for sufficiently large $\pm t \geq 0$.
The non-stationary phase method implies that
\begin{align*}
|J_\mp^k e^{-itH_0} u(x)|\leq C_{N}(1+|x|+|t|)^{-N} , \quad x\in\Z^d,\ \pm t \geq0,
\end{align*}
for any $N\geq1$.
Thus we obtain $\|W^\pm(J_\mp)u\| = 0$.

For the other assertion $I^\pm(J_\mp) = 0$, the intertwining property implies
\begin{align*}
I^\pm(\mathcal{J})=I^\pm(\mathcal{J}) E_H(\G) = E_{H_0}(\G) I^\pm(\mathcal{J}).
\end{align*}
Thus we learn that for any $v\in\mathcal{H}$
\begin{align*}
(I^\pm(J_\mp)u, v) =& ( E_{H_0}(\G) I^\pm(J_\mp)u, v) \\
=& \lim_{t\to\pm\infty} (e^{itH_0} J_\mp^* e^{-itH} E_{H}^{\textrm{ac}}(\G) u, E_{H_0}(\G) v) \\
=& \lim_{t\to\pm\infty} ( E_{H}^{\textrm{ac}}(\G) u, e^{itH} J_\mp e^{-itH_0} E_{H_0}(\G) v) \\
=& ( E_{H}^{\textrm{ac}}(\G) u, W^\pm(J_\mp) v) \\
=& 0
\end{align*}
by the first assertion.
\end{proof}

\begin{prop}
For any $u\in\mathcal{H}$,
\begin{align}
\|W^\pm(J_\pm)u\| =& \|E_{H_0}(\G)u\|, \label{p isom}\\
\|I^\pm(J_\pm)u\| =& \|E_{H}^{\textrm{ac}}(\G)u\| . \label{i p isom}
\end{align}
\end{prop}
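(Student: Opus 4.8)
The plan is to reduce each of the two norm identities to a single strong operator limit, and then to evaluate those limits with the pseudodifference calculus of Lemma~\ref{J lem} together with the propagation estimates of Section~\ref{Mourre section}. First I would exploit that $e^{itH}$ is unitary and that the limit defining $W^\pm(J_\pm)$ exists strongly: strong convergence of a net forces convergence of its norms, so
\begin{align*}
\|W^\pm(J_\pm)u\| = \lim_{t\to\pm\infty}\|J_\pm e^{-itH_0}E_{H_0}(\G)u\| , \qquad \|I^\pm(J_\pm)u\| = \lim_{t\to\pm\infty}\|J_\pm^* e^{-itH}E_H^{\textrm{ac}}(\G)u\| .
\end{align*}
Squaring and moving one factor across the inner product, \eqref{p isom} becomes the assertion $\slim e^{itH_0}J_\pm^*J_\pm e^{-itH_0}E_{H_0}(\G) = E_{H_0}(\G)$, while \eqref{i p isom} becomes $\slim e^{itH}J_\pm J_\pm^* e^{-itH}E_H^{\textrm{ac}}(\G) = E_H^{\textrm{ac}}(\G)$.

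For the $H_0$-side limit I would expand $J_\pm^*J_\pm=\sum_{k,\ell}(J_\pm^k)^*J_\pm^\ell$. The off-diagonal terms $k\neq\ell$ are compact by Lemma~\ref{J lem}~(5), and since $H_0$ is purely absolutely continuous on $\G$ (Remark~\ref{rem purely ac}) they drop out of the limit because $e^{-itH_0}E_{H_0}(\G)u\rightharpoonup0$. For the diagonal terms Lemma~\ref{J lem}~(2) replaces $(J_\pm^k)^*J_\pm^k$ by $s_\pm^k(x,D_x)^*s_\pm^k(x,D_x)$ modulo an error $\la x\ra^{-\rho}B$ with $B$ bounded, which vanishes in the limit by the minimal-velocity estimate $\la x\ra^{-\rho}e^{-itH_0}E_{H_0}(\G)u\to0$ (valid since $\nabla_\x\l_k\neq0$ on the non-threshold set). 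It then remains to evaluate the free limit along the classical flow $x\approx t\nabla_\x\l_k(\x)$: there $\cos(x,\nabla_\x\l_k(\x))\to\pm1$ and $\eta(x)\to1$, hence $\s_\pm\to1$ by \eqref{sigma 01}, and the limiting Fourier multiplier is $\sum_k P_k(D_x)\chi_k(D_x)^2P_k(D_x)$, which acts as $E_{H_0}(\G)$ on $E_{H_0}(\G)\mathcal{H}$ by \eqref{Rep of E_H0} and the choice $\chi_k=1$ on $\l_k^{-1}(\G)$.

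The $H$-side limit is handled in the same spirit, with two simplifications and one genuine difficulty. The off-diagonal products now vanish identically, $J_\pm^k(J_\pm^\ell)^*=0$ for $k\neq\ell$ (Lemma~\ref{J lem}~(5)), and Lemma~\ref{J lem}~(2) reduces the diagonal to $\sum_k s_\pm^k(x,D_x)s_\pm^k(x,D_x)^*$ modulo an error absorbed by the $H$-minimal-velocity estimate. Because the conjugation is by the full evolution $e^{-itH}$, the classical-flow heuristic is replaced by the propagation and radiation estimates of Section~\ref{Mourre section}. Here I would use \eqref{sigma sum1} to write $\s_\pm^2=1-\s_\mp^2$, splitting the principal symbol as $\sum_k\eta^2 P_k\chi_k - \sum_k\eta^2\s_\mp(\cos(x,\nabla_\x\l_k))^2P_k\chi_k$. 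The wrong-sign piece is an incoming (resp.\ outgoing) symbol whose conjugated limit vanishes on $E_H^{\textrm{ac}}(\G)\mathcal{H}$ as $t\to\pm\infty$ by the radiation estimate for \eqref{radiation op}, in analogy with Proposition~\ref{nonstationary prop} but on the $H$-side. For the remaining piece $\sum_k\eta^2 P_k(D_x)\chi_k(D_x)$ the factor $\eta^2\to1$ by minimal velocity, and I would convert the $H_0$-momentum localization $\sum_k P_k(D_x)\chi_k(D_x)$ into $E_H^{\textrm{ac}}(\G)$ using that $\psi(H)-\psi(H_0)$ is compact (as established in the proof of Proposition~\ref{LAP prop}) together with \eqref{Rep of E_H0}, the compact difference disappearing in the limit.

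The main obstacle is precisely this last step of the $H$-side computation: one must show that the conjugated operator concentrates on the correct energy-momentum region and carries the correct outgoing/incoming directionality. This is where the radiation estimate for the operators \eqref{radiation op} and the absence of thresholds ($\nabla_\x\l_k\neq0$) are indispensable, and it is in effect equivalent to the completeness asserted in Theorem~\ref{main}~iii); by contrast the $H_0$-side identity \eqref{p isom} rests only on the elementary classical-flow asymptotics and requires none of this machinery.
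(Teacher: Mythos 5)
Your overall skeleton --- reducing each identity to a quadratic-form limit, discarding the off-diagonal terms $k\neq\ell$ by Lemma \ref{J lem} (5) plus weak vanishing of $u_t$, replacing $(J_\pm^k)^*J_\pm^k$ and $J_\pm^k(J_\pm^k)^*$ by their symbols via Lemma \ref{J lem} (2), and the splitting $\s_\pm^2=1-\s_\mp^2$ from \eqref{sigma sum1} --- is exactly the right algebra, and your $H_0$-side argument is essentially sound: the ``classical flow'' evaluation is rigorously a non-stationary phase argument, which is precisely the content of Proposition \ref{nonstationary prop}. (A small remark: no minimal-velocity estimate is needed anywhere; on the lattice, multiplication by $\la x\ra^{-\rho}$ is a compact operator, so $\la x\ra^{-\rho}u_t\to0$ already follows from $u_t\rightharpoonup0$ on the absolutely continuous subspace.)

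The genuine gap is your treatment of the wrong-sign piece on the $H$-side. The radiation estimate cannot kill it: the operators \eqref{radiation op} are built from $\nabla_{k,j}^\perp$, the component of the group velocity \emph{transverse} to $x$, and this symbol is small to leading order whenever $\nabla_\x\l_k(\x)$ is radial with \emph{either} sign --- the radiation observable is blind to the incoming/outgoing distinction, so local $H$-smoothness of $\jap{x}^{-1/2}\nabla_{k,j}^\perp$ gives no control of $(v_t,\operatorname{Op}(\eta^2\s_\mp^2 P_k\chi_k^2)v_t)$ with $v_t=e^{-itH}E_H^{\textrm{ac}}(\G)u$; moreover an $L^2$-in-$t$ bound would at best give vanishing along a time subsequence, and you have not supplied existence of the limit. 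The paper closes this step with the algebra you already wrote down: $\eta^2\s_\mp^2\chi_k^2P_k$ is precisely the principal symbol of $J_\mp^k(J_\mp^k)^*$ (Lemma \ref{J lem} (2) again), so the wrong-sign piece equals $\|J_\mp^*v_t\|^2$ modulo vanishing errors, and this converges to $\|I^\pm(J_\mp)u\|^2=0$ by Proposition \ref{nonstationary prop}. Note how that vanishing is actually proved: the strong limit $I^\pm(J_\mp)$ \emph{exists} by Proposition \ref{existence of local WO} --- this is where the radiation estimates really enter, through Kato smoothness of the transversal factors $A_{\pm,j}^k$, whose symbols contain $\s_\pm'$ and hence are supported in the region $|\cos(x,\nabla\l_k(\x))|\leq\frac12$ --- and the limit is then shown to be zero by duality, $(I^\pm(J_\mp)u,v)=(E_H(\G)u,W^\pm(J_\mp)v)=0$, using the non-stationary-phase result on the $H_0$ side. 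Equivalently, the paper simply sums over both signs, $\|W^\pm(J_+)u\|^2+\|W^\pm(J_-)u\|^2=\|E_{H_0}(\G)u\|^2$ (and likewise for $I^\pm$, using compactness of $\psi(H)-\psi(H_0)$ as you correctly propose), so that \eqref{sigma sum1} eliminates $\s_\pm$ altogether and Proposition \ref{nonstationary prop} removes the wrong-sign summand; no direct propagation estimate for $e^{-itH}$ in the incoming region is ever needed, nor is one available from the tools of Section \ref{Mourre section}.
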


\begin{proof}
We learn
\begin{align*}
\|W^\pm(\mathcal{J})u\|^2
= \lim_{t\to\pm\infty} \| \mathcal{J}e^{-itH_0}E_{H_0}(\G)u \|^2
= \lim_{t\to\pm\infty} \left( u_t, \mathcal{J}^*\mathcal{J} u_t \right) ,
\end{align*}
where $u_t:=e^{-itH_0}E_{H_0}(\G)u$.
Thus Lemmas \ref{adjoint lemma}, \ref{composition lemma}, \ref{J lem} (2), (5) and \eqref{Rep of E_H0}, \eqref{chi 1}, \eqref{sigma sum1} imply
\begin{align*}
&\|W^\pm(J_+)u\|^2 + \|W^\pm(J_-)u\|^2 \\
=& \lim_{t\to\pm\infty} \left( u_t, (J_+^*J_+ + J_-^*J_-) u_t \right) \\
=& \lim_{t\to\pm\infty} \left( u_t, \left(\sum_{k=1}^K (J_+^k)^* J_+^k + (J_-^k)^* J_-^k\right) u_t \right) \\
=& \lim_{t\to\pm\infty} \left( u_t, 
\left( \sum_{k=1}^K s_+^k(x,D_x) s_+^k(x,D_x)^* + s_-^k(x,D_x) s_-^k(x,D_x)^* \right) u_t \right) \\
=& \lim_{t\to\pm\infty} \left( u_t, 
\y(x)^2 \sum_{k=1}^K (P_k\chi_k^2)(D_x) u_t \right) \\
=& \lim_{t\to\pm\infty} \left( u_t, 
\y(x)^2 u_t \right) \\
=& \|E_{H_0}(\G)u \|^2 .
\end{align*}
Here we have used \eqref{Rep of E_H0} and \eqref{eta 01} to obtain $\sum_{k=1}^K (P_k\chi_k^2)(D_x) E_{H_0}(\G) = E_{H_0}(\G)$ and compactness of $1-\y(x)^2$.
Therefore we have the first equality \eqref{p isom} by Proposition \ref{nonstationary prop}.

The other equality \eqref{i p isom} is obtained by the similar argument and the compactness of $\psi(H)-\psi(H_0)$ for $\psi\in C_c^\infty(\R)$.

\end{proof}

It remains to prove the completeness of \eqref{modified WOs}.
However it is proved by the existence of $I^\pm(J_\pm)$ and \eqref{i p isom}.


\end{document}